\theoremstyle{plain}
\newtheorem{lemma}{Lemma}
\newtheorem{claim}{Claim}
\newcommand{\vct}[1]{\boldsymbol{#1}} % vector
\newcommand{\mat}[1]{\boldsymbol{#1}} % matrix
\newcommand{\T}{^{\textrm T}} % transpose
\DeclareMathOperator{\argmax}{arg\,max}
\DeclareMathOperator{\argmin}{arg\,min}
\newtheorem{thm}{Theorem}
\newtheorem{assumption}{Assumption}
\newcommand{\eat}[1]{}
\newcommand{\innerp}[1]{\left\langle #1 \right\rangle}
\DeclareMathOperator{\supp}{supp}
\DeclareMathOperator{\sgn}{sgn}
\DeclareMathOperator{\diam}{diam}
\begin{document}

\title{A Distributed Frank-Wolfe Algorithm for Communication-Efficient Sparse Learning}
\author{Aur\'elien Bellet\thanks{LTCI UMR 5141, T\'el\'ecom ParisTech \& CNRS, \texttt{aurelien.bellet@telecom-paristech.fr}. Most of the work in this paper was carried out while the author was affiliated with the Department of Computer Science of the University of Southern California.}
\and 
Yingyu Liang\thanks{Department of Computer Science, Princeton University,
\texttt{yingyul@cs.princeton.edu}.}
\and
Alireza Bagheri Garakani\thanks{Department of Computer Science, University of Southern California,
\texttt{\{bagherig,feisha\}@usc.edu}.}
\and
Maria-Florina Balcan\thanks{School of Computer Science, Carnegie Mellon University, \texttt{ninamf@cs.cmu.edu}.}
\and
Fei Sha\footnotemark[3]
}

\date{}

\maketitle

\begin{abstract}
Learning sparse combinations is a frequent theme in machine learning. In this paper, we study its associated optimization problem in the distributed setting where the elements to be combined are not centrally located but spread over a network. We address the key challenges of balancing communication costs and optimization errors. To this end, we propose a distributed Frank-Wolfe (dFW) algorithm. We obtain theoretical guarantees on the optimization error $\epsilon$ and communication cost that do not depend on the total number of combining elements. We further show that the communication cost of dFW is optimal by deriving a lower-bound on the communication cost required to construct an $\epsilon$-approximate solution. We validate our theoretical analysis with empirical studies on synthetic and real-world data, which demonstrate that dFW outperforms both baselines and competing methods. We also study the performance of dFW when the conditions of our analysis are relaxed, and show that dFW is fairly robust.
\end{abstract}

% !TEX root = arXiv_main.tex

\section{Introduction}

Most machine learning algorithms are designed to be executed on a single computer that has access to the full training set. However, in recent years, large-scale data has become increasingly distributed across several machines ({\it nodes}) connected through a network, for storage purposes or because it is collected at different physical locations, as in cloud computing, mobile devices or wireless sensor networks \citep{Lewis2004,Armbrust2010}. In these situations, communication between machines is often the main bottleneck and the naive approach of sending the entire dataset to a coordinator node is impractical.
This distributed setting triggers many interesting research questions. From a fundamental perspective, studying the tradeoff between communication complexity \citep{Yao1979,abelson1980lower} and learning/optimization error has recently attracted some interest \citep{Balcan2012,DaumeIII2012,Shamir2014}. From a more practical view, a lot of research has gone into developing scalable algorithms with small communication and synchronization overhead \citep[see for instance][]{Lazarevic2002,Forero2010,Boyd2011,Duchi2012,Dekel2012,Balcan2013,Yang2013} and references therein).

In this paper, we present both theoretical and practical results for the problem of learning a sparse combination of elements ({\it atoms}) that are spread over a network. This has numerous applications in machine learning, such as LASSO regression \citep{Tibshirani1996}, support vector machines \citep{Cortes1995}, multiple kernel learning \citep{Bach2004}, $\ell_1$-Adaboost \citep{Shen2010} and sparse coding \citep{Lee2006}. Formally, let $\mat{A} = [\vct{a}_1\dots\vct{a}_n]\in\mathbb{R}^{d\times n}$ be a finite set of $n$ $d$-dimensional atoms (one per column). Here, atoms broadly refer to data points, features, dictionary elements, basis functions, etc, depending on the application. We wish to find a weight vector $\vct{\alpha}\in \mathbb{R}^n$ ($\alpha_i$ giving the weight assigned to atom $i$) that minimizes some cost function $f(\vct{\alpha}) = g(\mat{A}\vct{\alpha})$ under a sparsity constraint:\footnote{Our results also hold for sparsity constraints other than the $\ell_1$ norm, such as simplex constraints.}
\begin{equation}
\label{eq:sga}
\min_{\vct{\alpha}\in \mathbb{R}^n} \quad f(\vct{\alpha})\quad \text{s.t.} \quad \|\vct{\alpha}\|_1 \leq \beta,
\end{equation}
where the cost function $f(\vct{\alpha})$ is convex and continuously differentiable and $\beta> 0$.

We propose a distributed Frank-Wolfe (dFW) algorithm, a novel approach to solve such distributed sparse learning problems based on an adaptation of its centralized counterpart \citep{Frank1956,Clarkson2010,Jaggi2013}.
Intuitively, dFW is able to determine, in a communication-efficient way, which atoms are needed to reach an $\epsilon$-approximate solution, and only those are broadcasted through the network. We also introduce an approximate variant of dFW that can be used to balance the amount of computation across the distributed machines to reduce synchronization costs.
In terms of theoretical guarantees, we show that the total amount of communication required by dFW is upper-bounded by a quantity that does not depend on $n$ but only on $\epsilon$, $d$ and the topology of the network. We further establish that this dependency on $\epsilon$ and $d$ is worst-case optimal for the family of problems of interest by proving a matching lower-bound of $\Omega(d/\epsilon)$ communication for any deterministic algorithm to construct an $\epsilon$-approximation. 
Simple to implement, scalable and parameter-free, dFW can be applied to many distributed learning problems. We illustrate its practical performance on two learning tasks: LASSO regression with distributed features and Kernel SVM with distributed examples. Our main experimental findings are as follows: (i) dFW outperforms baseline strategies that select atoms using a local criterion, (ii) when the data and/or the solution is sparse, dFW requires less communication than ADMM, a popular competing method, and (iii) in a real-world distributed architecture, dFW achieves significant speedups over the centralized version, and is robust to random communication drops and asynchrony in the updates.

\paragraph{Outline} Section~\ref{sec:fw} reviews the Frank-Wolfe algorithm in the centralized setting. In Section~\ref{sec:dfw}, we introduce our proposed dFW algorithm and an approximate variant of practical interest, as well as examples of applications. The theoretical analysis of dFW, including our matching communication lower bound, is presented in Section~\ref{sec:theory}. Section~\ref{sec:related} reviews some related work, and experimental results are presented in Section~\ref{sec:exp}. Finally, we conclude in Section~\ref{sec:conclu}.

\paragraph{Notations}

We denote scalars by lightface letters, vectors by bold lower case letters and matrices by bold upper case letters. For $i\in\mathbb{N}$, we use $[i]$ to denote the set $\{1,\dots,i\}$. For $i\in [n]$, $\vct{e}^i\in\mathbb{R}^n$ is the all zeros vector except 1 in the $i^{\text{th}}$ entry. For a vector $\vct{\alpha}$, the support of $\vct{\alpha}\in\mathbb{R}^n$ is defined as $\supp(\vct{\alpha}) = \{i\in[n] : \alpha_i \neq 0\}$. The gradient of a differentiable function $f$ at $\vct{\alpha}$ is denoted by $\nabla f(\vct{\alpha})$.

\section{The Frank-Wolfe Algorithm}
\label{sec:fw}

%The notion of coreset originated from computational geometry for problems like minimum enclosing ball \cite{Badoiu2008} and clustering \cite{Feldman2011}. There exist many definitions of coreset --- we focus on the following specification to problem \eqref{eq:sga}.
%
%\begin{definition}[$\epsilon$-coreset]
%\label{def:coreset}
%An $\epsilon$-coreset of size $c$ for problem \eqref{eq:sga} is a subset $\mat{C}\in\mathbb{R}^{d\times c}$ of $c$ columns of $\mat{A}$ such that:
%$$\min_{\vct{\alpha}\in\mathbb{R}^c,\|\vct{\alpha}\|_1 \leq \beta} g(\mat{C}\vct{\alpha})\leq \min_{\vct{\alpha}\in\mathbb{R}^n,\|\vct{\alpha}\|_1 \leq \beta} g(\mat{A}\vct{\alpha}) + \epsilon.$$
%\end{definition}
%
%In other words, a coreset is a subset of atoms such that there exists a combination of only those atoms whose cost is close to optimal. In the following, we review the Frank-Wolfe algorithm and show how it can be used to construct an $\epsilon$-coreset for problem \eqref{eq:sga} in the centralized setting.

\begin{algorithm}[t]
\caption{FW algorithm on general domain $\mathcal{D}$}
\label{alg:fw}
\begin{algorithmic}[1]
\STATE Let $\vct{\alpha}^{(0)} \in \mathcal{D}$
\FOR{$k = 0,1,2,\dots$}
\STATE $\vct{s}^{(k)} = \argmin_{\vct{s}\in\mathcal{D}} \innerp{\vct{s},\nabla f(\vct{\alpha}^{(k)})}$
\STATE $\vct{\alpha}^{(k+1)} = (1-\gamma) \vct{\alpha}^{(k)} + \gamma \vct{s}^{(k)}$, where $\gamma=\frac{2}{k+2}$ or obtained via line-search.
\ENDFOR
\STATE{{\bf stopping criterion}}: $\innerp{\vct{\alpha}^{(k)} - \vct{s}^{(k)},\nabla f(\vct{\alpha}^{(k)})} \leq \epsilon$
\end{algorithmic}
\end{algorithm}

\begin{algorithm}[t]
\caption{Compute $\vct{s}^{(k)}$ for $\ell_1$ / simplex}
\label{alg:local}
\begin{algorithmic}[1]
\STATE  \textbf{if} $\ell_1$ constraint
\STATE \quad $j^{(k)} = \argmax_{j\in [n]} \left|\nabla f(\vct{\alpha}^{(k)})_j\right|$
\STATE \quad $\vct{s}^{(k)} = \sgn \left[-\nabla f(\vct{\alpha}^{(k)})_{j^{(k)}}\right]\beta\vct{e}^{j^{(k)}}$
\STATE \textbf{else if} simplex constraint
\STATE \quad $j^{(k)} = \argmin_{j\in [n]} \nabla f(\vct{\alpha}^{(k)})_j$
\STATE \quad $\vct{s}^{(k)} = \vct{e}^{j^{(k)}}$
\end{algorithmic}
\end{algorithm}

The Frank-Wolfe (FW) algorithm \citep{Frank1956,Clarkson2010,Jaggi2013}, also known as Conditional Gradient \citep{Levitin1966}, is a procedure to solve convex constrained optimization problems of the form
\begin{equation}
\label{eq:genfw}
\min_{\vct{\alpha}\in \mathcal{D}} \quad f(\vct{\alpha}),
\end{equation}
where $f$ is convex and continuously differentiable, and $\mathcal{D}$ is a compact convex subset of any vector space (say $\mathbb{R}^d$). The basic FW algorithm is shown in Algorithm~\ref{alg:fw}. At each iteration, it moves towards a feasible point $\vct{s}$ that minimizes a linearization of $f$ at the current iterate. The stopping criterion uses a surrogate optimality gap that can be easily computed as a by-product of the algorithm.
%\begin{equation}
%\label{eq:gap}
%h(\vct{\alpha}) := \max_{s\in\mathcal{D}}\innerp{\vct{\alpha}-\vct{s},\nabla f(\vct{\alpha})},
%\end{equation}
%for any $\vct{\alpha}\in\mathcal{D}$. The convexity of $f$ implies $h(\vct{\alpha}) \geq f(\vct{\alpha}) - f(\vct{\alpha}^*)$, where $\vct{\alpha}^*\in\mathcal{D}$ is an optimal solution to \eqref{eq:fwgen} \cite{Jaggi2013}.
Let $\vct{\alpha}^*$ be an optimal solution of \eqref{eq:genfw}. Theorem~\ref{thm:fwconv} shows that FW finds an $\epsilon$-approximate solution in $O(1/\epsilon)$ iterations.

\begin{thm}[\citealp{Jaggi2013}]
\label{thm:fwconv}
Let $C_f$ be the curvature constant of $f$.\footnote{The assumption of bounded $C_f$ is closely related to that of $L$-Lipschitz continuity of $\nabla f$ with respect to an arbitrary chosen norm $\|\cdot\|$. In particular, we have $C_f \leq L\diam_{\|\cdot\|}(\mathcal{D})^2$ \citep{Jaggi2013}.} Algorithm~\ref{alg:fw} terminates after at most $6.75C_f/\epsilon$ iterations and outputs a feasible point $\tilde{\vct{\alpha}}$ which satisfies $f(\tilde{\vct{\alpha}}) - f(\vct{\alpha}^*) \leq \epsilon$.
\end{thm}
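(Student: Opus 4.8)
The plan is to follow the standard Frank--Wolfe analysis, which rests on a one-step descent inequality, a convexity lower bound on the surrogate gap, and a telescoping argument. Throughout write $g^{(k)} := \innerp{\vct{\alpha}^{(k)}-\vct{s}^{(k)},\nabla f(\vct{\alpha}^{(k)})}$ for the surrogate optimality gap appearing in the stopping criterion, $h^{(k)} := f(\vct{\alpha}^{(k)}) - f(\vct{\alpha}^*)$ for the primal suboptimality, and $\gamma_k = \tfrac{2}{k+2}$ (the line-search case only improves the bound). Recall the curvature constant
\[
C_f = \sup_{\substack{\vct{x},\vct{s}\in\mathcal{D},\ \gamma\in(0,1]\\ \vct{y}=\vct{x}+\gamma(\vct{s}-\vct{x})}} \frac{2}{\gamma^2}\Big(f(\vct{y}) - f(\vct{x}) - \innerp{\vct{y}-\vct{x},\nabla f(\vct{x})}\Big),
\]
which measures how far $f$ can lie above its linearization along chords of $\mathcal{D}$.

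First I would derive the one-step progress bound: instantiating the curvature inequality with $\vct{x}=\vct{\alpha}^{(k)}$, $\vct{s}=\vct{s}^{(k)}$, $\gamma=\gamma_k$, and $\vct{y}=\vct{\alpha}^{(k+1)}=(1-\gamma_k)\vct{\alpha}^{(k)}+\gamma_k\vct{s}^{(k)}$ gives directly
\[
f(\vct{\alpha}^{(k+1)}) \le f(\vct{\alpha}^{(k)}) - \gamma_k\, g^{(k)} + \tfrac{\gamma_k^2}{2}C_f .
\]
Next, since $\vct{s}^{(k)}$ minimizes $\innerp{\cdot,\nabla f(\vct{\alpha}^{(k)})}$ over $\mathcal{D}$ and $\vct{\alpha}^*\in\mathcal{D}$, convexity of $f$ yields $g^{(k)} \ge \innerp{\vct{\alpha}^{(k)}-\vct{\alpha}^*,\nabla f(\vct{\alpha}^{(k)})} \ge h^{(k)} \ge 0$. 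In particular, whenever the algorithm stops we have $f(\tilde{\vct{\alpha}}) - f(\vct{\alpha}^*) = h^{(k)} \le g^{(k)} \le \epsilon$, which is the accuracy guarantee; the only remaining task is to bound the stopping time.

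Combining the two displays gives the recursion $h^{(k+1)} \le (1-\gamma_k)h^{(k)} + \tfrac{\gamma_k^2}{2}C_f$, and a routine induction on $k$ with $\gamma_k=\tfrac{2}{k+2}$ shows $h^{(k)} \le \tfrac{2C_f}{k+2}$. To bound the number of steps until $g^{(k)}\le\epsilon$, I would argue by contradiction: if $g^{(k)}>\epsilon$ for every $k\le K$, rearrange the progress bound as $\gamma_k g^{(k)} \le h^{(k)} - h^{(k+1)} + \tfrac{\gamma_k^2}{2}C_f$ and sum over a carefully chosen window of late iterations (roughly $k$ from $\lceil K/2\rceil$ to $K$). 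The $h$-terms telescope and are controlled by the primal estimate $h^{(\lceil K/2\rceil)} \le \tfrac{2C_f}{\lceil K/2\rceil+2}$; the leftover $\sum_k \tfrac{\gamma_k^2}{2}C_f$ is a harmonic-type tail; and since $(\min_k g^{(k)})\sum_k \gamma_k \le \sum_k \gamma_k g^{(k)}$, one gets $\min_k g^{(k)} \le \tfrac{(27/4)\,C_f}{K} = \tfrac{6.75\,C_f}{K}$, contradicting $g^{(k)}>\epsilon$ as soon as $K > 6.75\,C_f/\epsilon$.

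The main obstacle is extracting the sharp constant $6.75 = 27/4$ in this last step: the descent inequality and the primal induction are essentially mechanical, but getting $27/4$ rather than a cruder constant (say $8$) requires the right choice of summation window together with tight estimates of the partial sums $\sum_k\gamma_k$ and $\sum_k\gamma_k^2$, so that no slack is lost in the telescoping. Since the statement is credited to \citet{Jaggi2013}, an acceptable alternative is to cite that reference for the optimized constant and present only the clean argument giving $h^{(k)}\le 2C_f/(k+2)$ together with a gap bound carrying a slightly larger constant.
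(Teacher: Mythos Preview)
Your proof sketch is correct and follows the standard argument from \citet{Jaggi2013}, which is exactly what the paper does: the theorem is stated with attribution to that reference and no proof is given in the paper itself. Your outline of the one-step descent inequality, the convexity bound $g^{(k)}\ge h^{(k)}$, the induction yielding $h^{(k)}\le 2C_f/(k+2)$, and the windowed telescoping to extract the $27/4$ constant on the surrogate gap is precisely the route taken in the cited work, so there is nothing to compare.
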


\paragraph{Solving the subproblems} The key step in FW is to solve a linear minimization subproblem on $\mathcal{D}$ (step 3 in Algorithm~\ref{alg:fw}). When the extreme points of $\mathcal{D}$ have a special structure (e.g., when they are sparse), it can be exploited to solve this subproblem efficiently. In particular, when $\mathcal{D}$ is an $\ell_1$ norm constraint as in problem \eqref{eq:sga}, a \textit{single} coordinate (corresponding to the largest entry in the magnitude of the current gradient) is greedily added to the solution found so far (Algorithm~\ref{alg:local}). Taking $\vct{\alpha}^{(0)} = \vct{0}$, we get $\|\tilde{\vct{\alpha}}\|_0 \leq 6.75C_f/\epsilon$. Combining this observation with Theorem~\ref{thm:fwconv} shows that Algorithm~\ref{alg:fw} always finds an $\epsilon$-approximate solution to problem \eqref{eq:sga} with $O(1/\epsilon)$ nonzero entries, which is sometimes called an $\epsilon$-coreset of size $O(1/\epsilon)$ \citep{Shalev-Shwartz2010,Jaggi2011}.
It turns out that this is worst-case optimal (up to constant terms) for the family of problems \eqref{eq:sga}, as evidenced by the derivation of a matching lower bound of $\Omega(1/\epsilon)$ for the sparsity of an $\epsilon$-approximate solution \citep{Jaggi2011}.\footnote{Note that the dependency on $\epsilon$ can be improved to $O(n\log(1/\epsilon))$ for strongly convex functions by using so-called away-steps (see for instance \citep{Lacoste-Julien2013a}). However, when $n$ is large, this improvement might not be desirable since it introduces a dependency on $n$.}

Another domain $\mathcal{D}$ of interest is the unit simplex $\Delta_n = \{\alpha\in\mathbb{R}^n : \vct{\alpha}\geq 0, \sum_i\alpha_i = 1\}$. The solution to the linear subproblem, shown in Algorithm~\ref{alg:local}, also ensures the sparsity of the iterates and leads to an $\epsilon$-approximation with $O(1/\epsilon)$ nonzero entries, which is again optimal \citep{Clarkson2010,Jaggi2013}.

For both domains, the number of nonzero entries depends only on $\epsilon$, and not on $n$, which is particularly useful for tackling large-scale problems. These results constitute the basis for our design of a distributed algorithm with small communication overhead. 

\section{Distributed Frank-Wolfe (dFW) Algorithm}
\label{sec:dfw}

\newcommand{\broadcast}{\textbf{broadcast} }
\newcommand{\term}{\textbf{terminate} }
\newcommand{\LINEIF}[2]{%
    \algorithmicif\ {#1}\ \algorithmicthen\ {#2} \algorithmicend\ \algorithmicif%
}
\begin{algorithm*}[t]
\caption{Distributed Frank-Wolfe (dFW) algorithm for problem \eqref{eq:sga}}
\label{alg:dfw}
\begin{algorithmic}[1]
\STATE on all nodes: $\vct{\alpha}^{(0)} = \vct{0}$
\FOR{$k = 0,1,2,\dots$}
\STATE on each node $v_i\in V$:
\begin{itemize}[noitemsep,topsep=2pt,itemindent=-10pt]
\item \hspace*{-0.5cm} $j^{(k)}_i = \argmax_{j\in \mathcal{A}_i} \left| \nabla f(\vct{\alpha}^{(k)})_j\right|$\hspace{0.5cm}{\it\small // find largest entry of the local gradient in absolute value}
\item \hspace*{-0.5cm} $S^{(k)}_i = \sum_{j\in\mathcal{A}_i} \alpha^{(k)}_j\nabla f(\vct{\alpha}^{(k)})_j$\hspace{2.5cm}{\it\small // compute partial sum for stopping criterion}
\item \hspace*{-0.5cm} \broadcast $g^{(k)}_i = \nabla f(\vct{\alpha}^{(k)})_{j^{(k)}_i}$ and $S^{(k)}_i$\hspace{6.48cm}
\end{itemize}
\STATE on each node $v_i\in V$:
\begin{itemize}[noitemsep,topsep=2pt,itemindent=-10pt]
\item \hspace*{-0.5cm} $i^{(k)}=\argmax_{i\in[N]} \left|g^{(k)}_i\right|$\hspace{2cm}{\it\small // compute index of node with largest overall gradient }
\item \hspace*{-0.5cm} \LINEIF{$i = i^{(k)}$}{\broadcast $j^{(k)} = j^{(k)}_i$ and $\vct{a}_{j^{(k)}}$}\hspace{0.2cm}{\it\small // send atom and index to other nodes}
\end{itemize}
%\STATE on node $v_{i^{(k)}}$ with $i^{(k)}=\argmax_{i\in[N]} \left|g_i\right|$: \broadcast $j^{(k)} = j^{(k)}_i$ and atom $\vct{a}_{j^{(k)}}$ to other nodes
\STATE on all nodes: $\vct{\alpha}^{(k+1)} = (1-\gamma) \vct{\alpha}^{(k)} + \gamma \sgn \left[-g^{(k)}_{i^{(k)}}\right]\beta\vct{e}^{j^{(k)}}$, where $\gamma=\frac{2}{k+2}$ or line-search
\ENDFOR
\STATE{{\bf stopping criterion}}: $\sum_{i=1}^N S^{(k)}_i+\beta\left|g^{(k)}_{i^{(k)}}\right| \leq \epsilon$\hspace{7.5cm}
\end{algorithmic}
\end{algorithm*}

We now focus on the problem of solving \eqref{eq:sga} in the distributed setting. We consider a set of $N$ local nodes $V = \{v_i\}_{i=1}^N$ which communicate according to an undirected connected graph $G = (V,E)$ where $E$ is a set of $M$ edges. An edge $(v_i,v_j)\in E$ indicates that $v_i$ and $v_j$ can communicate with each other. To simplify the analysis, we assume no latency in the network and synchronous updates --- our experiments in Section~\ref{sec:large} nonetheless suggest that these simplifying assumptions could be relaxed.
%and to simplify the analysis, we assume no latency in the network\FS{This is what got us in trouble at ICML. Thus, I would suggest rewrite this part to indicate that no-latency is a simplified assumption and extension to HogWild! type algorithm is in the work. for simplicity, we assume no latency in the network.}

The atom matrix $\mat{A}\in\mathbb{R}^{d\times n}$ is partitioned column-wise across $V$. Formally, for $i\in [N]$, let the {\it local data} of $v_i$, denoted by $\mathcal{A}_i \subseteq [n]$, be the set of column indices associated with node $v_i$, where $\bigcup_i \mathcal{A}_i = [n]$ and $\mathcal{A}_i  \bigcap \mathcal{A}_j  = \emptyset$ for $i\neq j$. The {\it local gradient} of $v_i$ is the gradient of $f$ restricted to indices in $\mathcal{A}_i$. We measure the communication cost in number of real values transmitted.

\subsection{Basic Algorithm}

Our communication-efficient procedure to solve \eqref{eq:sga} in the distributed setting is shown in Algorithm~\ref{alg:dfw}.\footnote{Note that it can be straightforwardly adapted to deal with a simplex constraint based on Algorithm~\ref{alg:local}.} The algorithm is parameter-free and each iteration goes as follows: (i) each node first identifies the largest component of its local gradient (in absolute value) and broadcasts it to the other nodes (step 3), (ii) the node with the overall largest value broadcasts the corresponding atom, which will be needed by other nodes to compute their local gradients in subsequent iterations (step 4), and (iii) all nodes perform a FW update (step 5). We have the following result. %The away-step variant described in Section~\ref{sec:fw} can also be used without any additional communication. We can thus get a communication complexity in $O(n\log(1/\epsilon))$ for strongly convex functions.

\begin{thm}
\label{thm:dfw}
Algorithm~\ref{alg:dfw} terminates after $O(1/\epsilon)$ rounds and $O\left((Bd+NB)/\epsilon\right)$ total communication, where $B$ is an upper bound on the cost of broadcasting a real number to all nodes in the network $G$. At termination, all nodes hold the same $\tilde{\vct{\alpha}}$ with optimality gap at most $\epsilon$ and the $O(1/\epsilon)$ atoms corresponding to its nonzero entries.
\end{thm}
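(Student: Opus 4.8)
The plan is to show that Algorithm~\ref{alg:dfw} generates exactly the same iterates as the centralized Frank-Wolfe algorithm (Algorithm~\ref{alg:fw} equipped with the $\ell_1$ linear-minimization oracle of Algorithm~\ref{alg:local}), invoke Theorem~\ref{thm:fwconv} for the iteration count, and then tally the real numbers transmitted per round.

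First I would establish the following invariant by induction on $k$: at the beginning of iteration $k$, every node holds the same iterate $\vct{\alpha}^{(k)}$ together with all atoms $\vct{a}_j$ for $j\in\supp(\vct{\alpha}^{(k)})$. The base case holds since $\vct{\alpha}^{(0)}=\vct{0}$ on all nodes. For the inductive step, the key observation is that since $f(\vct{\alpha})=g(\mat{A}\vct{\alpha})$ we have $\nabla f(\vct{\alpha}^{(k)})_j=\vct{a}_j\T\nabla g(\mat{A}\vct{\alpha}^{(k)})$, and $\mat{A}\vct{\alpha}^{(k)}=\sum_{j\in\supp(\vct{\alpha}^{(k)})}\alpha^{(k)}_j\vct{a}_j$ can be assembled by any node that already holds those atoms; hence node $v_i$ can evaluate $\nabla f(\vct{\alpha}^{(k)})_j$ for all $j\in\mathcal{A}_i$, which is all that step~3 requires (this uses the standard assumption that the outer function $g$ is known to every node). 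Because the $\mathcal{A}_i$ partition $[n]$, the index $j^{(k)}$ chosen in step~4 attains $\max_{j\in[n]}|\nabla f(\vct{\alpha}^{(k)})_j|$, so $\sgn[-g^{(k)}_{i^{(k)}}]\beta\vct{e}^{j^{(k)}}$ is precisely the FW vertex $\vct{s}^{(k)}$ returned by Algorithm~\ref{alg:local}; the atom $\vct{a}_{j^{(k)}}$ is broadcast in step~4, all nodes apply the same update with the same $\gamma$ in step~5, and the invariant is maintained. (If line-search is used, each node can run it locally, since it only needs to evaluate $g$ along the segment from $\mat{A}\vct{\alpha}^{(k)}$ to $\sgn[-g^{(k)}_{i^{(k)}}]\beta\vct{a}_{j^{(k)}}$, both of which it holds.) Consequently the dFW iterates coincide with those of centralized FW.

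Next I would verify that the distributed stopping quantity equals the centralized surrogate gap. By the partition property, $\sum_{i=1}^N S^{(k)}_i=\sum_{j\in[n]}\alpha^{(k)}_j\nabla f(\vct{\alpha}^{(k)})_j=\innerp{\vct{\alpha}^{(k)},\nabla f(\vct{\alpha}^{(k)})}$, while $\innerp{\vct{s}^{(k)},\nabla f(\vct{\alpha}^{(k)})}=\sgn[-g^{(k)}_{i^{(k)}}]\beta\,g^{(k)}_{i^{(k)}}=-\beta|g^{(k)}_{i^{(k)}}|$, so $\sum_i S^{(k)}_i+\beta|g^{(k)}_{i^{(k)}}|=\innerp{\vct{\alpha}^{(k)}-\vct{s}^{(k)},\nabla f(\vct{\alpha}^{(k)})}$. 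Hence dFW halts exactly when centralized FW would, and by Theorem~\ref{thm:fwconv} this occurs within $6.75C_f/\epsilon=O(1/\epsilon)$ iterations, at which point the common iterate $\tilde{\vct{\alpha}}$ is feasible with $f(\tilde{\vct{\alpha}})-f(\vct{\alpha}^*)\le\epsilon$. Since one coordinate is added per iteration starting from $\vct{0}$, $\|\tilde{\vct{\alpha}}\|_0\le 6.75C_f/\epsilon=O(1/\epsilon)$, and by the invariant every node holds $\tilde{\vct{\alpha}}$ and the atoms indexed by its support.

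Finally I would count communication per round: step~3 has each of the $N$ nodes broadcast the two scalars $g^{(k)}_i$ and $S^{(k)}_i$, costing $O(NB)$; step~4 has one node broadcast the index $j^{(k)}$ and the $d$ entries of $\vct{a}_{j^{(k)}}$, costing $O(Bd)$; step~5 is local. So each round costs $O(Bd+NB)$, and over $O(1/\epsilon)$ rounds the total is $O((Bd+NB)/\epsilon)$, with no dependence on $n$. The main obstacle is the invariant of the second paragraph --- in particular the realization that a node never needs the full weight vector or the full atom matrix to form its local gradient, because the support of the iterate only ever contains atoms that have already been broadcast. Once that is in place, the partition bookkeeping, the gap identity, and the final arithmetic are routine.
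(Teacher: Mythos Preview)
Your proposal is correct and follows essentially the same approach as the paper's own proof: a claim that each node can compute its local gradient because $\mat{A}\vct{\alpha}^{(k)}$ depends only on the previously broadcast atoms, the verification that $j^{(k)}=\argmax_{j\in[n]}|\nabla f(\vct{\alpha}^{(k)})_j|$ via the partition property, the identity $\sum_i S^{(k)}_i+\beta|g^{(k)}_{i^{(k)}}|=\innerp{\vct{\alpha}^{(k)}-\vct{s}^{(k)},\nabla f(\vct{\alpha}^{(k)})}$, and the $O(Bd+NB)$ per-round communication count. Your explicit inductive invariant and your remark about line-search are slightly more detailed than the paper's presentation, but the substance is the same.
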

\begin{proof}
We prove this by showing that dFW performs Frank-Wolfe updates and that the nodes have enough information to execute the algorithm. The key observation is that the local gradient only depends on local data and atoms received so far. Details can be found in Appendix~\ref{app:dfw}.
\end{proof}

Theorem~\ref{thm:dfw} shows that dFW allows a trade-off between communication and optimization error without dependence on the total number of atoms, which makes the algorithm very appealing for large-scale problems. The communication cost depends only linearly on $1/\epsilon$, $d$ and the network topology.

\subsection{An Approximate Variant}
\label{sec:approx}

\begin{algorithm}[t]
\caption{GreedySelection($\mathcal{A}, \mathcal{C}, \Delta k$) \hfill // add $\Delta k$ centers from $\mathcal{A}$ to $\mathcal{C}$}
\label{alg:kcenter}
\begin{algorithmic}[1]
\STATE  \textbf{repeat} $\Delta k$ times
\STATE \quad $\mathcal{C} = \mathcal{C}\cup\{\vct{a}_{j'}\}$, where $j' = \argmax_{j\in \mathcal{A}} d(\vct{a}_j, \mathcal{C})$ with $d(\vct{a}_j, \mathcal{C}) = \min_{l \in \mathcal{C}} \|\vct{a}_j - \vct{a}_l \|_1$
\end{algorithmic}
\end{algorithm}

\begin{algorithm}[t]
\caption{Approximate dFW algorithm}
\label{alg:bdfw}
\begin{algorithmic}[1]
\STATE on each node $v_i\in V$: $\vct{\alpha}^{(0)} = \vct{0}$; 
$\mathcal{C}_i = \text{GreedySelection}(\emptyset, m_i^{(0)})$
\FOR{$k = 0,1,2,\dots$}
\STATE perform the same steps as in Algorithm~\ref{alg:dfw}, except that $j_i^{(k)}$ is selected from $\mathcal{C}_i$
\STATE on each node $v_i\in V$: $\mathcal{C}_i = \text{GreedySelection}(\mathcal{A}_i,\mathcal{C}_i, m_i^{(k+1)} - m_i^{(k)})$
\ENDFOR
\end{algorithmic}
\end{algorithm}

Besides its strong theoretical guarantees on optimization error and communication cost, dFW is very scalable, as the local computational cost of an iteration typically scales linearly in the number of atoms on the node. However, if these local costs are misaligned, the algorithm can suffer from significant wait time overhead due to the synchronized updates.
%Our focus so far has been the total volume of data transmitted, which is a key factor in applications where communication is the main bottleneck, such as computing in sensor networks or standard commodity clusters. However, other factors can also affect the performance, in particular the amount and balance of local computation. Note that the computational cost of an iteration of dFW typically scales linearly in the number of atoms on the node. 
To address this issue, we propose an approximate variant of dFW which can be used to balance or reduce the effective number of atoms on each node. The algorithm is as follows. First, each node $v_i$ clusters its local dataset $\mathcal{A}_i$ into $m_i$ groups using the classic greedy $m$-center algorithm of \citet{gonzalez1985clustering}, which repeatedly adds a new center that is farthest from the current set of centers (Algorithm~\ref{alg:kcenter}). Then, dFW is simply run on the resulting centers. Intuitively, each original atom has a nearby center, thus selecting only from the set of centers is optimal up to small additive error that only leads to small error in the final solution. 
Optionally, we can also gradually add more centers so that the additive error scales as $O(1/k)$, in which case the error essentially does not affect the quality of the final solution.
The details are given in Algorithm~\ref{alg:bdfw}, and we have the following result.

\begin{lemma}
Let $r^{opt}(\mathcal{A}, m)$ denote the optimal $\ell_1$-radius of partitioning the local data indexed by $\mathcal{A}$ into $m$ clusters,
and let $r^{opt}(\vct{m}) := \max_i r^{opt}(\mathcal{A}_i, m_i)$. Let $G := \max_{\vct{\alpha}} \| \nabla g(\mat{A}\vct{\alpha}) \|_\infty$. Then, Algorithm~\ref{alg:bdfw} computes a feasible solution with optimality gap at most $\epsilon + O(G r^{opt}(\vct{m}^{0}) )$ after $O(1/\epsilon)$ iterations. Furthermore, if $r^{opt}(\vct{m}^{(k)}) = O(1/Gk)$, then the optimality gap is at most $\epsilon$.
\end{lemma}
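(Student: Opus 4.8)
The plan is to view Algorithm~\ref{alg:bdfw} as an \emph{inexact} Frank--Wolfe run on the original problem~\eqref{eq:sga}, in which the only perturbation is that the linear-minimization step is solved over the current centers $\bigcup_i \mathcal{C}_i$ instead of over all of $[n]$, and then to control the resulting oracle error through the clustering radius. First I would bound that error. By the farthest-first guarantee of \citet{gonzalez1985clustering}, after picking $m$ centers from $\mathcal{A}_i$ every point of $\mathcal{A}_i$ lies within $\ell_1$-distance $2\,r^{opt}(\mathcal{A}_i,m)$ of some center, and since the greedy ordering is the same for all prefix lengths this holds simultaneously for every $m=m_i^{(k)}$ produced by the incremental calls to Algorithm~\ref{alg:kcenter}. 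Writing $\nabla f(\vct{\alpha}) = \mat{A}\T\nabla g(\mat{A}\vct{\alpha})$, for the atom $j$ attaining $\max_{j'\in[n]}|\nabla f(\vct{\alpha}^{(k)})_{j'}|$ and its nearest center $l$ we then have, by H\"older's inequality,
\[
\left|\nabla f(\vct{\alpha}^{(k)})_j - \nabla f(\vct{\alpha}^{(k)})_l\right| = \left|\innerp{\vct{a}_j-\vct{a}_l,\ \nabla g(\mat{A}\vct{\alpha}^{(k)})}\right| \le \|\vct{a}_j-\vct{a}_l\|_1\,\|\nabla g(\mat{A}\vct{\alpha}^{(k)})\|_\infty \le 2\,r^{opt}(\vct{m}^{(k)})\,G.
\]
Multiplying by the $\ell_1$-radius $\beta$ shows that the center-restricted subproblem is solved with additive error at most $\delta_k := 2\beta G\,r^{opt}(\vct{m}^{(k)})$, i.e.\ the vertex $\vct{s}^{(k)}$ returned satisfies $\innerp{\vct{s}^{(k)},\nabla f(\vct{\alpha}^{(k)})} \le \min_{\vct{s}\in\mathcal{D}}\innerp{\vct{s},\nabla f(\vct{\alpha}^{(k)})} + \delta_k \le \innerp{\vct{\alpha}^*,\nabla f(\vct{\alpha}^{(k)})} + \delta_k$, where $\vct{\alpha}^*$ solves~\eqref{eq:sga}.

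Next I would run the usual Frank--Wolfe descent estimate with this inexact oracle: combining the curvature inequality $f(\vct{\alpha}^{(k+1)}) \le f(\vct{\alpha}^{(k)}) + \gamma_k\innerp{\vct{s}^{(k)}-\vct{\alpha}^{(k)},\nabla f(\vct{\alpha}^{(k)})} + \tfrac{\gamma_k^2}{2}C_f$ with the oracle bound above and with convexity of $f$ gives, for $h_k := f(\vct{\alpha}^{(k)}) - f(\vct{\alpha}^*)$ and $\gamma_k = \tfrac{2}{k+2}$,
\[
h_{k+1} \le (1-\gamma_k)\,h_k + \tfrac{\gamma_k^2}{2}C_f + \gamma_k\delta_k,
\]
which is exactly the recursion underlying Theorem~\ref{thm:fwconv} with the extra term $\gamma_k\delta_k$. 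I would then solve it in the two regimes. Since the $m_i^{(k)}$ are non-decreasing, $r^{opt}(\vct{m}^{(k)}) \le r^{opt}(\vct{m}^{(0)})$ and hence $\delta_k \le \delta_0$; a routine induction (starting at $k=1$) then gives $h_k \le \tfrac{2C_f}{k+2} + \delta_0$, so $O(C_f/\epsilon)=O(1/\epsilon)$ iterations yield optimality gap at most $\epsilon + O\!\left(G\,r^{opt}(\vct{m}^{(0)})\right)$. If instead $r^{opt}(\vct{m}^{(k)}) = O(1/(Gk))$, then $\delta_k = O(1/k)$ and $\gamma_k\delta_k = O(\gamma_k^2)$, so the recursion becomes the exact Frank--Wolfe recursion with $C_f$ replaced by $C_f + O(1)$, and Theorem~\ref{thm:fwconv} gives optimality gap at most $\epsilon$ after $O(1/\epsilon)$ iterations. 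Feasibility is immediate, since every iterate is a convex combination of $\vct{\alpha}^{(0)}=\vct{0}$ and vertices $\pm\beta\vct{e}^{j}$ of $\mathcal{D}$; and the surrogate gap actually evaluated in the stopping test differs from the true Frank--Wolfe gap by at most $\delta_k$, so the same bounds apply to it.

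I expect the first step to be the only delicate part: one must invoke the $m$-center guarantee in the form ``the greedy farthest-first ordering is a $2$-approximation of the optimal radius for every prefix size,'' which is what legitimizes the incremental growth of $\mathcal{C}_i$ in Algorithm~\ref{alg:bdfw}, and one must carry the constants $\beta$ and $G$ cleanly through the H\"older step so that the oracle error is exactly $O\!\left(G\,r^{opt}(\vct{m}^{(k)})\right)$. The recursion manipulation afterwards is the textbook inexact-Frank--Wolfe argument and can be reused essentially verbatim from the proof of Theorem~\ref{thm:fwconv}.
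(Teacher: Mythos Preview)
Your proposal is correct and follows essentially the same route as the paper: bound the linear-minimization error via Gonzalez's 2-approximation guarantee plus H\"older, then invoke inexact Frank--Wolfe convergence. The paper is terser---it cites \citet{Freund2013} for the constant-error regime and \citet{Jaggi2013} for the $O(1/k)$-error regime rather than writing out the recursion $h_{k+1}\le(1-\gamma_k)h_k+\tfrac{\gamma_k^2}{2}C_f+\gamma_k\delta_k$ explicitly---but the underlying argument is identical, and your explicit treatment of the prefix property of farthest-first (needed for the incremental calls) and of the $\beta$ factor is a welcome clarification that the paper leaves implicit.
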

\begin{proof}
By the analysis in~\cite{gonzalez1985clustering}, the maximum cluster radius on node $v_i$ is at most $2r^{opt}(\mathcal{A}_i, m^{(k)}_i)$ at time $k$, hence the maximum $\ell_1$-radius over all clusters is at most $2r^{opt}(\vct{m}^{(k)})$.
For any $j' \in [n]$, there exists $j \in \cup_i \mathcal{S}_i$ such that
$$|\nabla f(\vct{\alpha})_j - \nabla f(\vct{\alpha})_{j'}| = |\vct{a}_j \T\nabla g(\mat{A}\vct{\alpha}) - \vct{a}_{j'}\T\nabla g(\mat{A}\vct{\alpha})| \leq 2r^{opt}(\vct{m}^{(k)}) G.$$
Thus the selected atom is optimal up to a $2r^{opt} (\vct{m}^{(k)}) G$ additive error. Then the first claim follows from the analysis of \cite{Freund2013}.

Similarly, when $r^{opt}(\vct{m}^{(k)}) = O(1/Gk)$, the additive error is $O(1/k)$. The second claim then follows from the analysis of \cite{Jaggi2013}.
\end{proof}

Algorithm~\ref{alg:bdfw} has the potential to improve runtime over exact dFW in various practical situations. For instance, when dealing with heterogenous local nodes, $v_i$ can pick $m_i$ to be proportional to its computational power so as to reduce the overall waiting time. Another situation where this variant can be useful is when the distribution of atoms across nodes is highly unbalanced. We illustrate the practical benefits of this approximate variant on large-scale Kernel SVM in Section~\ref{sec:exp}.

\subsection{Illustrative Examples}
\label{sec:app}

We give three examples of interesting problems that dFW can solve efficiently: LASSO regression, Kernel SVM and $\ell_1$-Adaboost. Others applications include sparse coding \cite{Lee2006} and multiple kernel learning \cite{Bach2004}.
%While the objective function $f(\vct{\alpha}) = g(\mat{A}\vct{\alpha})$ can take an arbitrary form, important applications of our framework arise from the following:
%\begin{equation}
%\label{eq:quad}
%f(\vct{\alpha}) = \|\vct{y} - \mat{A}\vct{\alpha}\|_2^2,
%\end{equation}
%where $\vct{y}\in\mathbb{R}^d$ is a target vector known to all nodes.

\paragraph{LASSO regression with distributed features}

LASSO \citep{Tibshirani1996} is a linear regression method which aims at approximating a target vector $\vct{y}\in\mathbb{R}^d$ by a sparse combination of features:
\begin{equation}
\label{eq:lasso}
\min_{\vct{\alpha}\in \mathbb{R}^n} \quad \|\vct{y} - \mat{A}\vct{\alpha}\|_2^2 \quad \text{s.t.} \quad \|\vct{\alpha}\|_1 \leq \beta,
\end{equation}
where $y_i$ and $a_{ij}$ are the target value and $j^{\text{th}}$ feature for training point $i$, respectively. Our dFW algorithm applies to problem \eqref{eq:lasso} in the context of distributed features (each local node holds a subset of the features for all training points), as common in distributed sensor networks \citep[see e.g.][]{Christoudias2008}. It applies in a similar manner to sparse logistic regression \citep{Shevade2003} and, perhaps more interestingly, to Group-LASSO \citep{Yuan2006}. For the latter, suppose each group of features is located on the same node, for instance when combining features coming from the same source in multi-view learning or when using dummy variables that encode categorical features. In this case, each atom is a group of features and dFW selects a single group at each iteration \citep[see][for the Frank-Wolfe update in this case]{Jaggi2013}.

%\subsection{Group-LASSO Regression}
%
%Group-LASSO regression \cite{Yuan2006} is a variant of LASSO where features are partitioned into $\mathcal{G}$ groups, where each group $g$ has $d_g$ features. We can rewrite $\vct{\alpha} = [\vct{\alpha}^1,\dots,\vct{\alpha}^G]^T$. The group lasso problem is then defined as:
%\begin{equation}
%\label{eq:grplasso}
%\begin{aligned}
%\min_{\vct{\alpha}\in \mathbb{R}^n} &&& \|\vct{y} - \mat{A}\vct{\alpha}\|_2^2\\
%\text{subject to} &&& \sum_{g=1}^G\sqrt{d_g}\|\vct{\alpha}^g\|_2 \leq \beta.
%\end{aligned}
%\end{equation}
%The group lasso constraint favors sparsity at the group level (the weights of entire groups of features are set to 0).
%
%It turns out the Frank-Wolfe algorithm can be modified to accommodate the group lasso regularizer: at each iteration, the group that has largest $\ell_2$-norm in the gradient is added to the solution \cite{Jaggi2013}. Suppose features in the same group are located on the same node. This is not an unrealistic assumption since in practice groups can be composed of features coming from the same source or of dummy variables that encode a single categorical feature \cite{Meier2008}. In this situation, our distributed algorithm readily applies to problem \eqref{eq:grplasso}, and a single group of features is broadcasted at each iteration.

\paragraph{Support Vector Machines with distributed examples}

Let $\{\vct{z}_i = (\vct{x}_i, y_i)\}_{i=1}^n$ be a training sample with $\vct{x}_i\in\mathbb{R}^d$ and $y_i \in \{\pm1\}$. Let $k(\vct{x},\vct{x}') = \innerp{\varphi(\vct{x}),\varphi(\vct{x}')}$ be a PSD kernel. The dual problem of L2-loss SVM with kernel $k$ is as follows \citep{Tsang2005,Ouyang2010}:
\begin{equation}
\label{eq:svmd}
\min_{\vct{\alpha}\in \Delta_n} \quad \vct{\alpha}\T\tilde{\mat{K}}\vct{\alpha},
\end{equation}
with $\tilde{\mat{K}} =  [\tilde{k}(\vct{z}_i,\vct{z}_j)]_{i,j=1}^n$ where $\tilde{k}(\vct{z}_i,\vct{z}_j) =y_iy_j k(\vct{x}_i,\vct{x}_j) + y_i y_j + \frac{\delta_{ij}}{C}$ is the ``augmented kernel''.

It is easy to see that $\tilde{k}(\vct{z}_i,\vct{z}_j) = \innerp{\tilde{\varphi}(\vct{z}_i),\tilde{\varphi}(\vct{z}_j)}$ with $\tilde{\varphi}(\vct{z}_i) = [y_i \varphi(\vct{x}_i), y_i, \frac{1}{\sqrt{C}} \vct{e}_i]$.
%$$\tilde{\varphi}(\vct{z}_i) = \left[
%                  \begin{array}{c}
%                    y_i \varphi(\vct{x}_i) \\
%                    y_i \\
%                    \frac{1}{\sqrt{C}} \vct{e}_i \\
%                  \end{array}
%                \right].
%$$
Thus $\tilde{\mat{K}} = \tilde{\mat{\Phi}}\T\tilde{\mat{\Phi}}$ where $\tilde{\mat{\Phi}} = [ \tilde{\varphi}(\vct{z}_1),\dots, \tilde{\varphi}(\vct{z}_n)]\T$ is the atom matrix. Notice however that atoms may be of very high (or even infinite) dimension since they lie in kernel space. Fortunately, the gradient only depends on $\tilde{\mat{K}}\in\mathbb{R}^{n\times n}$, thus we can broadcast the training point $\vct{z}_i$ instead of $\tilde{\varphi}(\vct{z}_i)$ (assuming all nodes know the kernel function $k$). Note that dFW can be seen as a distributed version of Core Vector Machines \citep{Tsang2005}, a large-scale SVM algorithm based on Frank-Wolfe, and that it applies in a similar manner to a direct multi-class L2-SVM \citep{Asharaf2007} and to Structural SVM \citep[see the recent work of][]{Lacoste-Julien2013}.

%\subsection{Structural SVM}
%
%An efficient algorithm for Structural SVM \cite{Tsochantaridis2005} based on the Frank-Wolfe algorithm has recently been proposed by \cite{Lacoste-Julien2013}. The dual problem is given by:
%\begin{equation}
%\label{eq:structsvmd}
%\begin{aligned}
%\min_{\vct{\alpha}\in\mathbb{R}^m} &&& \frac{\lambda}{2}\|\mat{A}\vct{\alpha}\|_2^2 - \mat{b}\T\vct{\alpha}\\
%\text{subject to} &&& \sum_{\vct{y}\in \mathcal{Y}_i}\alpha_i(\vct{y}) = 1, \forall i\in[n]\\
%&&& \vct{\alpha}\geq 0,
%\end{aligned}
%\end{equation}
%
%[ONLY MENTION IT IN THE SVM SECTION?]

\paragraph{Boosting with distributed base classifiers}

Let $\mathcal{H}=\{h_j(\cdot) : \mathcal{X}\rightarrow\mathbb{R}, j=1,\dots,n\}$ be a class of base classifiers to be combined. Given a training sample $\{(\vct{x}_i, y_i)\}_{i=1}^d$, let $\mat{A}\in\mathbb{R}^{d\times n}$ where $a_{ij} = y_ih_j(\vct{x}_i)$. We consider the following $\ell_1$-Adaboost formulation \citep{Shen2010}:
\begin{equation}
\label{eq:boost}
\min_{\vct{\alpha}\in \Delta_n} \quad \log\left(\frac{1}{d}\sum_{i=1}^d\exp\left(-\frac{(\mat{A}\vct{\alpha})_i}{T}\right)\right),
\end{equation}
where $T>0$ is a tuning parameter. Our algorithm thus straightforwardly applies to \eqref{eq:boost} when base classifiers are distributed across nodes.
The gradient of \eqref{eq:boost} is as follows:
$$\nabla f(x) = -\mat{w}^T\mat{A},\quad\text{with } \vct{w} = \frac{\exp(-\mat{A}\vct{\alpha}/T)}{\sum_{i=1}^d\exp\left(-(\mat{A}\vct{\alpha})_i/T\right)}.$$
The Frank-Wolfe update thus corresponds to adding the base classifier that performs best on the training sample weighted by $\vct{w}$, where $\vct{w}$ defines a distribution that favors points that are currently misclassified.
This can be done efficiently even for a large (potentially infinite) family of classifiers when a weak learner for $\mathcal{H}$ is available.\footnote{A weak learner for $\mathcal{H}$ returns the base classifier in $\mathcal{H}$ that does best on the weighted sample defined by $\vct{w}$.} In particular, when $\mathcal{H}$ is the class of decision stumps and the features are distributed, each node can call the weak learner on its local features to determine its best local base classifier.

% !TEX root = arXiv_main.tex

\newcommand{\OPT}{\mathrm{OPT}}
\newcommand{\card}[1]{\mathrm{card}({#1})}

\section{Theoretical Analysis}
\label{sec:theory}

\begin{figure}[t]
\centering
\includegraphics[width=0.6\columnwidth]{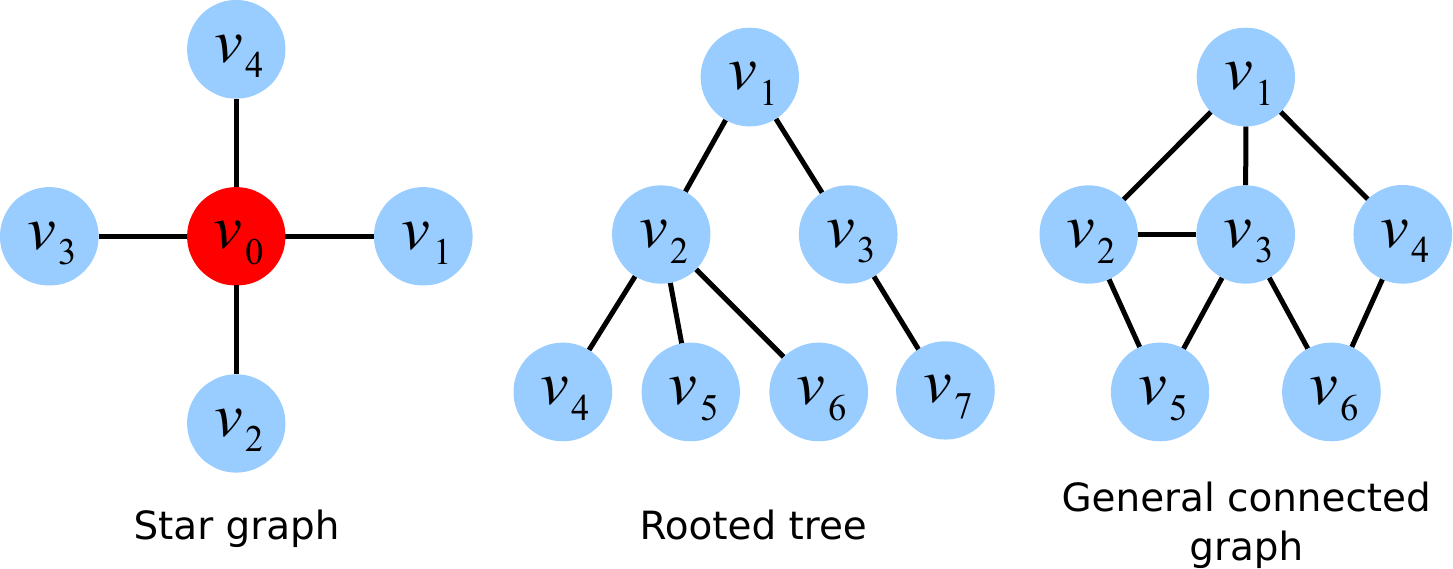}
\caption{Illustration of network topologies.}
\label{fig:network}
\vspace{-2mm}
\end{figure}

\subsection{Communication Cost of dFW under Different Network Topology}

We derive the communication cost of dFW for various types of network graphs, depicted in Figure~\ref{fig:network}. %These results are summarized in Table~\ref{tab:com} and compared to the union of local coresets baseline. We essentially obtain coresets that are smaller (in fact, worst-case optimal) using significantly less communication.

%\begin{figure}[t]
%\begin{center}
%\includegraphics[width=0.6\columnwidth]{figure/graph.pdf}
%\end{center}
%\caption{Illustration of network topologies.}
%\label{fig:network}
%\vspace{-2mm}
%\end{figure}

\paragraph{Star network}

In a star network, there is a coordinator node $v_0$ connected to all local nodes $v_1,\dots,v_N$.
Broadcasting a number to all nodes has cost $B=N$, so
the communication cost is
$O\left((Nd+N^2)/\epsilon\right)$.
This can be improved to $O(Nd/\epsilon)$: instead of broadcasting $g^{(k)}_i$ and $S^{(k)}_i$, we can send these quantities only to $v_0$
which selects the maximum $g^{(k)}_i$ and computes the sum of $S^{(k)}_i$.

\paragraph{Rooted tree}
Broadcasting on a tree costs
$B=O(N)$, but this can be improved again by avoiding broadcast of $g^{(k)}_i$ and
$S^{(k)}_i$. To select the maximum $g^{(k)}_i$, each node sends to its parent the maximum value among its local
$g^{(k)}_i$ and the values received from its children. The
root gets the maximum $g^{(k)}_i$ and can send it back to all
the nodes. A similar trick works for computing the sum of
$S^{(k)}_i$. Hence, only $j^{(k)}$ and the selected atom need to be broadcasted, and the total
communication is $O(Nd/\epsilon)$.

\paragraph{General connected graph}
If the nodes can agree on a spanning tree, then this reduces to
the previous case.
We further consider the case when nodes operate without using any information beyond that in their local neighborhood and cannot agree on a spanning tree \citep[this is known as the {\it fully distributed
setting},][]{mosk2010fully}.
In this case, broadcasting can be done through a message-passing procedure similar to the one used in \citet{Balcan2013}.
The cost of broadcasting a number is $B=O(M)$, and thus the
total communication cost is $O\left(M(N+d)/\epsilon\right)$.

%\begin{table*}[t]
%\caption{Communication cost and size of the obtained coreset for different network
%topologies.\\}
%\label{tab:com}
%\vskip 0.15in
%\begin{center}
%\begin{small}
%\begin{tabular}{|r|cc|cc|}
%\hline
%\multirow{2}{*}{{\bf Network topology}} & \multicolumn{2}{c|}{{\bf Communication}} & \multicolumn{2}{c|}{{\bf Size of coreset}} \\
% & This paper & Union of local coresets & This paper & Union of local coresets\\
%\hline
%{\bf Star} & $O(Nd/\epsilon)$ & $O(Nd/\epsilon)$ & \multirow{3}{*}{$O(1/\epsilon)$} &  \multirow{3}{*}{$O(N/\epsilon)$}\\
%{\bf Rooted tree of height $H$} & $O(Nd/\epsilon)$ & $O(HNd/\epsilon)$ & & \\
%{\bf General graph} & $O(M(N+d)/\epsilon)$ & $O(MNd/\epsilon)$ & &\\
%\hline
%\end{tabular}
%\end{small}
%\end{center}
%\vskip -0.1in
%\end{table*}

\subsection{Lower Bound on Communication Cost}

In this section, we derive a lower bound on the communication required by any deterministic algorithm to construct an $\epsilon$-approximation to problem \eqref{eq:sga} for either the $\ell_1$-ball or the simplex constraint. We assume that at the end of the algorithm, at least one of the nodes has all the selected atoms. This is not a restrictive assumption since in many applications, knowledge of the selected atoms is required in order to use the learned model (for instance in kernel SVM).

Our proof is based on designing an optimization problem that meets the desired minimum amount of communication.
First, we consider a problem for which any $\epsilon$-approximation solution must have $O(1/\epsilon)$ selected atoms. We split the data across two nodes $v_1$ and $v_2$ so that these atoms must be almost evenly split across the two nodes.
Then, we show that for any fix dataset on one node, there are $T$ different instances of the dataset on the other node, so that in any two such instances, the sets of selected atoms are different.
Therefore, for any node to figure out the atom set, it needs $O(\log T)$ bits. The proof is completed by showing $\log T = \Omega(d/\epsilon)$.

\subsubsection{Setup}

Consider the problem of minimizing $f(\vct{\alpha}):=\|\mat{A} \vct{\alpha} \|_2^2$ over the unit simplex $\Delta_d$, where $\mat{A}$ is a $d\times d$ orthonormal matrix (i.e., $\mat{A}^T \mat{A} = \mat{I}$). Suppose the first $d/2$ columns of $\mat{A}$ (denoted as $\mat{A}_1$) are on the node $v_1$ and the other half (denoted as $\mat{A}_2$) are on the node $v_2$. Restrict $\mat{A}$ to be block diagonal as follows: vectors in $\mat{A}_1$ only have non-zeros entries in coordinates $\{1,\dots, d/2\}$ and vectors in $\mat{A}_2$ only have non-zeros entries in coordinates $\{d/2+1,\dots, d\}$.

For any deterministic algorithm $\mathfrak{A}$, let $\mathfrak{A}(\mat{A}_1,\mat{A}_2)$ denote the set of atoms selected given input $\mat{A}_1$ and $\mat{A}_2$, and let $f_\mathfrak{A}(\mat{A}_1,\mat{A}_2)$ denote the minimum value of $f(\vct{\alpha})$ when $\vct{\alpha} \in \Delta_d$ and $\vct{\alpha}$ has only non-zero entries on $\mathfrak{A}(\mat{A}_1,\mat{A}_2)$. We only consider deterministic algorithms that lead to provably good approximations: $\mathfrak{A}$ satisfies $f_\mathfrak{A}(\mat{A}_1,\mat{A}_2) < \epsilon + \OPT$ for any $\mat{A}_1,\mat{A}_2$. Furthermore, at the end of the algorithm, at least one of the nodes has all the selected atoms.

In practice, bounded precision representations are used for real values.
Study of communication in such case requires algebraic assumptions~\citep{abelson1980lower,tsitsiklis1987communication}.
For our problem, we allow the columns in $\mat{A}$ to be orthogonal and of unit length only approximately. More precisely, we say that two vectors $\vct{v}_1$ and $\vct{v}_2$ are near-orthogonal if $|\vct{v}_1^\top \vct{v}_2|\leq \frac{\epsilon}{4d^2}$, and a vector $\vct{v}$ has near-unit length if $|\vct{v}^\top \vct{v} -1| \leq \frac{\epsilon}{4d^2}$. The notion of subspace is then defined based on near-orthogonality and near-unit length. We make the following assumption about the representation precision of vectors in subspaces.

\begin{assumption}\label{ass:unit}
There exists a constant $\kappa > 1$ s.t. for any $t$-dimensional subspace, $t \leq d$,
the number of different $d$-dimensional near-unit vectors in the subspace is $\Theta(\kappa^{t-1})$.
\end{assumption}

\subsubsection{Analysis}

Let $\epsilon \in (0,1/6)$ and $d=\frac{1}{6\epsilon}$ (for simplicity, suppose $d/4$ is an integer).
We first show that the atoms in any good approximation solution must be almost evenly split across the two nodes.

\begin{claim}\label{cla:nnz}
Any $\vct{\hat\alpha}$ with $f(\vct{\hat\alpha}) < \epsilon + \OPT$ has more than $3d/4$ non-zero entries. Consequently, $|\mathfrak{A}(\mat{A}_p, \mat{A}_q)\cap \mat{A}_p| > d/4$ for $\{p,q\} = \{1,2\}$.
\end{claim}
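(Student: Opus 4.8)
The plan is to reduce the objective to a near-Euclidean squared norm, apply a Cauchy--Schwarz sparsity lower bound, and finish with a pigeonhole count across the two nodes.

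First I would pin down $\OPT$ up to the representation error. Since $\mat{A}$ is near-orthonormal, I would write $f(\vct{\alpha}) = \vct{\alpha}^\top(\mat{A}^\top\mat{A})\vct{\alpha} = \|\vct{\alpha}\|_2^2 + \vct{\alpha}^\top(\mat{A}^\top\mat{A}-\mat{I})\vct{\alpha}$ and bound the perturbation using that every entry of $\mat{A}^\top\mat{A}-\mat{I}$ has magnitude at most $\epsilon/(4d^2)$ (by near-orthogonality and near-unit length) together with $\|\vct{\alpha}\|_1 = 1$ on $\Delta_d$, which gives $|\vct{\alpha}^\top(\mat{A}^\top\mat{A}-\mat{I})\vct{\alpha}| \le \|\vct{\alpha}\|_1^2\cdot \epsilon/(4d^2) = \epsilon/(4d^2)$ and hence $\bigl|f(\vct{\alpha}) - \|\vct{\alpha}\|_2^2\bigr| \le \epsilon/(4d^2)$ for all $\vct{\alpha}\in\Delta_d$. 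Evaluating at the uniform point $\vct{\alpha} = \mathbf{1}/d$ then yields $\OPT \le 1/d + \epsilon/(4d^2)$.

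Next I would lower-bound $f$ on sparse feasible points: if $\|\vct{\hat\alpha}\|_0 = k$, restricting the sum $\sum_i\hat\alpha_i = 1$ to the support and applying Cauchy--Schwarz gives $\|\vct{\hat\alpha}\|_2^2 \ge 1/k$, hence $f(\vct{\hat\alpha}) \ge 1/k - \epsilon/(4d^2)$ by the previous step. Combining the two bounds, the hypothesis $f(\vct{\hat\alpha}) < \epsilon + \OPT$ forces $1/k < \epsilon + 1/d + \epsilon/(2d^2)$; substituting $\epsilon = 1/(6d)$, the right-hand side equals $7/(6d) + 1/(12d^3) \le 5/(4d)$ (using $d\ge 1$), so $k > 4d/5 > 3d/4$, which is the first assertion. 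For the consequence, I would note that the minimizer $\vct{\hat\alpha}$ realizing $f_\mathfrak{A}(\mat{A}_p,\mat{A}_q)$ is feasible, is supported on $\mathfrak{A}(\mat{A}_p,\mat{A}_q)$, and satisfies $f(\vct{\hat\alpha}) < \epsilon + \OPT$ by the standing assumption on $\mathfrak{A}$, so $|\mathfrak{A}(\mat{A}_p,\mat{A}_q)| \ge k > 3d/4$; since the two nodes hold disjoint atom sets and node $q$ holds exactly $d/2$ atoms, $|\mathfrak{A}(\mat{A}_p,\mat{A}_q)\cap\mat{A}_p| = |\mathfrak{A}(\mat{A}_p,\mat{A}_q)| - |\mathfrak{A}(\mat{A}_p,\mat{A}_q)\cap\mat{A}_q| > 3d/4 - d/2 = d/4$.

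There is no genuine obstacle here; the only point requiring care is the bookkeeping with the $\epsilon/(4d^2)$ representation error, which must be shown not to erode the gap between the clean bound $k > 6d/7$ that the exactly orthonormal problem would give and the weaker target $k > 3d/4$. Since that error is $O(1/d^3)$ while the relevant gap is $\Theta(1/d)$, it is harmless, but I would keep the constants explicit (as above) rather than absorb them into $O(\cdot)$ so that the final inequality $k>3d/4$ is manifestly valid for every admissible $d$.
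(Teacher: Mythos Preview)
Your proposal is correct and follows essentially the same route as the paper: bound the deviation $|f(\vct{\alpha})-\|\vct{\alpha}\|_2^2|$ using the entrywise control on $\mat{A}^\top\mat{A}-\mat{I}$, reduce to the sparsity lower bound $\|\vct{\alpha}\|_2^2\ge 1/k$ on the simplex (which the paper cites as a lemma of Jaggi and you derive directly via Cauchy--Schwarz), and then pigeonhole across the two nodes. The only cosmetic differences are that the paper uses the looser perturbation bound $\epsilon/4$ (bounding $|\alpha_i|\le 1$ rather than $\|\vct{\alpha}\|_1=1$) and routes through $\OPT_g$ instead of bounding $\OPT$ directly, arriving at $1/k<2\epsilon+1/d$; your tighter constants are fine but not needed.
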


\begin{proof}
First note that $f(\vct{\alpha}) = \vct{\alpha}^T \mat{A}^T \mat{A} \vct{\alpha}$. When using exact real values, $\mat{A}^T \mat{A}=\mat{I}$, thus the problem is equivalent to minimizing $g(\vct{\alpha}):= \|\vct{\alpha}\|_2^2$ over the unit simplex. When using bounded precision representation, the exact equivalence does not hold. Still, by our definitions of near-unit vectors and near-orthogonal vectors,
$
	\mat{A}^T \mat{A}=\mat{I} + \mat{E},
$
where each entry in $\mat{E}$ has absolute value at most $\frac{\epsilon}{4d^2}$. So for any $\vct{\alpha}$, $|f(\vct{\alpha}) - g(\vct{\alpha})| = |\vct{\alpha}^T \mat{E} \vct{\alpha}| \leq \frac{\epsilon}{4}$.

Then, any $\vct{\hat\alpha} \in \Delta_d$ with $f(\vct{\hat\alpha}) < \epsilon + \OPT$ satisfies that $g(\vct{\hat\alpha}) < 2\epsilon + \OPT_g$ where $\OPT_g$ is the optimal value for $g(\vct{\alpha})$. To see this, let $\vct{\alpha}_f$ denote the optimal solution for $f$ and $\vct{\alpha}_g$ denote the optimal solution for $g$. We have:
$$
g(\vct{\hat\alpha}) - \frac{\epsilon}{4} \leq f(\vct{\hat\alpha})  \leq \epsilon + f(\vct{\alpha}_f) \leq \epsilon + f(\vct{\alpha}_g) \leq \epsilon + g(\vct{\alpha}_g) + \frac{\epsilon}{4},
$$
which leads to $g(\vct{\hat\alpha}) < 2\epsilon + \OPT_g$.
By Lemma 3 of \citet{Jaggi2013}, we have that for $k \leq d$,
$$
	\min_{\vct{\alpha} \in \Delta_d, \card{\vct{\alpha} }  \leq k} g(\vct{\alpha}) = 1/k.
$$
So $\OPT_g=1/d$. Hence $ 1/\card{\vct{\hat\alpha}} < 2\epsilon + 1/d$, which leads to the claim.
\end{proof}

Next, we show that for any fix dataset on one node, there are many different instances of the dataset on the other node, so that any two different instances lead to different selected atoms.

\begin{claim}\label{cla:more}
For any instance of $\mat{A}_1$, there exist
$T = {\kappa^{\Omega(d/\epsilon)} } / { (d/e)^{O(d)} }$
different instances of $\mat{A}_2$ (denoted as $\{\mat{A}^i_2\}_{i=1}^T$)
such that the set of atoms selected from $\mat{A}^i_2$
are different for all $i$, i.e., for any $0\leq i\neq j \leq T$,
$$
	\mathfrak{A}(\mat{A}_1, \mat{A}^i_2) \cap \mat{A}^i_2 \neq \mathfrak{A}(\mat{A}_1, \mat{A}^j_2) \cap \mat{A}^j_2.
$$
\end{claim}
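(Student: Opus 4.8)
The plan is to establish this claim by a counting/pigeonhole argument over the admissible instances of $\mat{A}_2$, grouping them by the set of atoms the algorithm selects from them. Since two admissible instances producing the same selected set must lie in the same group, it suffices to show that the number of distinct groups is at least $T = \kappa^{\Omega(d/\epsilon)}/(d/e)^{O(d)}$; choosing one representative per group then yields $T$ instances with pairwise distinct selected atom sets. The bound will be uniform in $\mat{A}_1$, since none of the counts below depends on $\mat{A}_1$ — only the map sending an instance to its selected set does.

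First I would count the total number $N_2$ of admissible $\mat{A}_2$, i.e.\ sets of $d/2$ pairwise near-orthogonal, near-unit vectors lying in the $(d/2)$-dimensional coordinate subspace spanned by $\{d/2+1,\dots,d\}$. Enumerating such a set greedily, the $t$-th vector must be a near-unit vector in the $(d/2-t+1)$-dimensional near-orthogonal complement of the first $t-1$ vectors, so by Assumption~\ref{ass:unit} there are $\Theta(\kappa^{d/2-t})$ choices for it. Multiplying over $t$ and dividing by $(d/2)!$ to pass from ordered tuples to sets gives $N_2 = 2^{O(d)}\,\kappa^{\binom{d/2}{2}}/(d/2)!$, where the $2^{O(d)}$ absorbs the constants hidden in the $\Theta(\cdot)$'s. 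Next I would bound the size of a single group, i.e.\ the number of admissible $\mat{A}_2$ that yield a fixed selected set $S := \mathfrak{A}(\mat{A}_1,\mat{A}_2)\cap \mat{A}_2$. By Claim~\ref{cla:nnz} any such $S$ satisfies $|S| > d/4$, and since $S \subseteq \mat{A}_2$, every member of the group is obtained by adjoining to $S$ a set of $d/2-|S| < d/4$ pairwise near-orthogonal, near-unit vectors in the near-orthogonal complement of $\mathrm{span}(S)$, a subspace of dimension $d/2-|S|$. The same greedy enumeration bounds the number of such completions by $\prod_{t=1}^{d/2-|S|}\Theta(\kappa^{d/2-|S|-t}) \le 2^{O(d)}\,\kappa^{\binom{d/4}{2}}$, so every group has size at most $2^{O(d)}\,\kappa^{\binom{d/4}{2}}$.

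Combining the two bounds by pigeonhole, the number $T$ of distinct selected sets satisfies
\[ T \;\ge\; \frac{N_2}{2^{O(d)}\,\kappa^{\binom{d/4}{2}}} \;=\; \frac{\kappa^{\binom{d/2}{2}-\binom{d/4}{2}}}{2^{O(d)}\,(d/2)!} \;=\; \frac{\kappa^{\Omega(d^2)}}{(d/e)^{O(d)}}, \]
where I use $\binom{d/2}{2}-\binom{d/4}{2} = \tfrac{3}{32}d^2 - O(d) = \Omega(d^2)$ and Stirling's approximation $(d/2)! = (d/e)^{O(d)}$, which also swallows the remaining $2^{O(d)}$ factor. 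Substituting $d = 1/(6\epsilon)$ gives $d^2 = \Theta(d/\epsilon)$, hence $T = \kappa^{\Omega(d/\epsilon)}/(d/e)^{O(d)}$, as claimed.

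I expect the main obstacle to be not this arithmetic but making the near-orthogonality relaxation rigorous inside the greedy enumeration: one must check that the near-orthogonal complement of a near-orthonormal set is well-defined with the stated dimension, that adjoining a near-unit vector drawn from it preserves near-orthonormality of the whole collection with constants that do not deteriorate across the (at most $d$) steps, and that the accumulated $2^{O(d)}$ slack is genuinely dominated by the $(d/e)^{O(d)}$ term so that the final exponent $\Omega(d/\epsilon)$ survives.
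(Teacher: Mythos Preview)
Your proposal is correct and follows essentially the same pigeonhole/counting argument as the paper. The paper introduces the notation $\mathcal{N}_D(t)$ for the number of sets of $t$ pairwise near-orthogonal near-unit vectors in $D$ dimensions, derives $\mathcal{N}_D(t) = \Theta(\kappa^{(2D-t-1)t/2}/t!)$ by the same greedy enumeration you describe, and bounds $T \ge \mathcal{N}_{d/2}(d/2)/\mathcal{N}_{d/2}(d/4)$; your bound on the group size is slightly tighter (you restrict the completion to the $(d/2-|S|)$-dimensional near-orthogonal complement of $S$ rather than the full $d/2$-dimensional ambient space), but both ratios yield $\kappa^{\Omega(d^2)} = \kappa^{\Omega(d/\epsilon)}$ after substituting $d = 1/(6\epsilon)$.
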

\begin{proof}
Let $\mathcal{N}_D(t)$ denote the number of different sets of $t$ near-unit vectors in $D$ dimension
that are near-orthogonal to each other. By Claim~\ref{cla:nnz}, if $\mathfrak{A}(\mat{A}_1, \mat{A}^i_2) \cap \mat{A}^i_2 = \mathfrak{A}(\mat{A}_1, \mat{A}^j_2) \cap \mat{A}^j_2$,
then $|\mat{A}^i_2 \setminus \mat{A}^j_2| \leq d/4$.
Then there are at most $\mathcal{N}_{d/2}(d/4)$ different $\mat{A}^i_2$ so that $\mathfrak{A}(\mat{A}_1, \mat{A}^i_2) \cap \mat{A}^i_2$ are all the same.
Since there are $\mathcal{N}_{d/2}(d/2)$ different instances of $\mat{A}_2$, we have 
$$
	T \geq \frac{\mathcal{N}_{d/2}(d/2)}{ \mathcal{N}_{d/2}(d/4)}.
$$

Next, we observe that $\mathcal{N}_{D}(t)=\Theta(\kappa^{(2D-t-1)t/2}/ t!)$. We first choose a near-unit vector in the $D$-dimensional subspace, and then choose a near-unit vector in the $(D-1)$-dimensional subspace orthogonal to the first chosen vector, and continue in this way until $t$ near-unit vectors are selected. By Assumption~1, there are $\Theta(\kappa^{(2D-t-1)t/2})$ different choices, and there are $t!$ repetitions, which leads to the bound on $\mathcal{N}_{D}(t)$. The bound on $T$ then follows from $d=\frac{1}{6\epsilon}$. 
\end{proof}

Distinguishing between the $T$ instances requires communication, resulting in our lower bound.

\begin{thm}\label{thm:lower}
Suppose a deterministic algorithm $\mathfrak{A}$ satisfies $f_\mathfrak{A}(\mat{A}_1,\mat{A}_2) < \epsilon + \OPT$ for any $\mat{A}_1,\mat{A}_2$.
Under Assumption~\ref{ass:unit}, the communication cost of $\mathfrak{A}$ is lower-bounded by $\Omega(\frac{d}{\epsilon} \log \kappa - d\log d)$ bits.
\end{thm}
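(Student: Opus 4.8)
The plan is to reduce to deterministic two-party communication complexity between $v_1$ and $v_2$ and to lower bound the worst-case bit length of the protocol's transcript by exhibiting many distinct transcripts. Write $\Pi(\mat{A}_1,\mat{A}_2)$ for the transcript produced by $\mathfrak{A}$, and recall the standard fact that the worst-case number of bits communicated is at least $\log_2$ of the number of distinct transcripts. The first ingredient is a \emph{localization} observation: the final memory of $v_1$ is a deterministic function of $\mat{A}_1$ and $\Pi(\mat{A}_1,\mat{A}_2)$, and likewise the final memory of $v_2$ is a function of $\mat{A}_2$ and $\Pi(\mat{A}_1,\mat{A}_2)$. Since $\mat{A}$ is block diagonal, $\mat{A}_1$ has no entries in coordinates $\{d/2+1,\dots,d\}$ and $\mat{A}_2$ has none in coordinates $\{1,\dots,d/2\}$. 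Combining this with the standing hypothesis that at termination some node holds \emph{all} of $\mathfrak{A}(\mat{A}_1,\mat{A}_2)$ --- which by Claim~\ref{cla:nnz} contains more than $d/4$ atoms from each of $\mat{A}_1$ and $\mat{A}_2$ --- we get: when $v_1$ is the holder, $\mathfrak{A}(\mat{A}_1,\mat{A}_2)\cap\mat{A}_2$ is determined by $(\mat{A}_1,\Pi(\mat{A}_1,\mat{A}_2))$, and when $v_2$ is the holder, $\mathfrak{A}(\mat{A}_1,\mat{A}_2)\cap\mat{A}_1$ is determined by $(\mat{A}_2,\Pi(\mat{A}_1,\mat{A}_2))$.

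Next I set up a $T\times T$ grid of hard instances. Claim~\ref{cla:more}, which fixes $\mat{A}_1$ and ranges over $\mat{A}_2$, together with its mirror image --- fixing $\mat{A}_2$ and ranging over $\mat{A}_1$, legitimate because the construction is symmetric in the two coordinate blocks --- provides families $\{\mat{A}_1^a\}_{a\in[T]}$ and $\{\mat{A}_2^b\}_{b\in[T]}$ that are \emph{simultaneously} distinguishing: for every fixed $a$ the sets $\mathfrak{A}(\mat{A}_1^a,\mat{A}_2^b)\cap\mat{A}_2^b$ are pairwise distinct over $b\in[T]$, and for every fixed $b$ the sets $\mathfrak{A}(\mat{A}_1^a,\mat{A}_2^b)\cap\mat{A}_1^a$ are pairwise distinct over $a\in[T]$. (Promoting Claim~\ref{cla:more}, which fixes one block at a time, to such a grid needs a short argument: a uniformly random choice of the two families has this property with positive probability, at the cost of only a constant-factor loss in the exponent of $T$.) Colour cell $(a,b)$ by which node holds all selected atoms in the run on $(\mat{A}_1^a,\mat{A}_2^b)$. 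Within a fixed row $a$, the transcripts of the cells coloured ``$v_1$'' are pairwise distinct, since $\mat{A}_1^a$ is fixed so the localization observation makes $\mathfrak{A}(\mat{A}_1^a,\mat{A}_2^b)\cap\mat{A}_2^b$ a function of the transcript, while those sets are pairwise distinct along the row; symmetrically, within a fixed column $b$ the transcripts of the cells coloured ``$v_2$'' are pairwise distinct. A pigeonhole argument closes the gap: either some row has at least $T/2$ cells coloured ``$v_1$'', producing $T/2$ distinct transcripts, or every row has more than $T/2$ cells coloured ``$v_2$'', hence more than $T^2/2$ such cells in total, so some column has more than $T/2$ of them, again producing $T/2$ distinct transcripts.

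In all cases there are at least $T/2$ distinct transcripts, so the protocol uses at least $\log_2(T/2)=\log_2 T-1$ bits in the worst case. Substituting $T=\kappa^{\Omega(d/\epsilon)}/(d/e)^{O(d)}$ from Claim~\ref{cla:more} and recalling $d=\frac{1}{6\epsilon}$ yields $\log_2 T=\Omega\!\big(\frac{d}{\epsilon}\log\kappa-d\log d\big)$, which is Theorem~\ref{thm:lower}. The step I expect to be the real work is the grid construction together with the colouring argument: one must carry the per-block distinctness of Claim~\ref{cla:more} across the whole grid so that rows and columns are both hard, and one must cope with the fact that the identity of the node holding the selected atoms can depend on the input --- which is exactly why the symmetric mirror claim and the row/column pigeonhole are needed rather than a naive argument that fixes $\mat{A}_1$ once and for all. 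All finite-precision bookkeeping is absorbed into Assumption~\ref{ass:unit}, which is what gives the counting $\mathcal{N}_D(t)=\Theta(\kappa^{(2D-t-1)t/2}/t!)$ used in Claim~\ref{cla:more}, and hence the final bound, its quantitative meaning.
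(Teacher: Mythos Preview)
Your argument is correct and in fact more careful than the paper's own proof. The paper gives a much shorter argument: fix $\mat{A}_1$, invoke Claim~\ref{cla:more} to obtain $T$ instances of $\mat{A}_2$ with pairwise distinct $\mathfrak{A}(\mat{A}_1,\mat{A}_2)\cap\mat{A}_2$, and observe that with fewer than $\log T$ bits $v_1$ cannot distinguish all $T$ instances, so on at least two of them $v_1$ would output the same set of atoms --- a contradiction; it then simply asserts that ``a similar argument holds if node $v_2$ has the selected atoms.'' You explicitly confront the subtlety this leaves implicit: which node ends up holding all selected atoms may vary with the input, so fixing $\mat{A}_1$ and reasoning only about $v_1$ is not obviously sufficient, nor can one swap to fixing $\mat{A}_2$ without restarting the argument. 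Your $T\times T$ grid together with the row/column colouring and the pigeonhole on the holder is a clean way to close that gap, at the price of the extra work of promoting Claim~\ref{cla:more} to a simultaneous two-sided family; your probabilistic sketch for that step is sound and costs only a constant factor in the exponent of $T$, which is absorbed by the $\Omega(\cdot)$. What the paper's route buys is brevity; what yours buys is a rigorous treatment of the holder-depends-on-input case.
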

\begin{proof}
Suppose the algorithm uses less than $\log T$ bits. At the end of the algorithm, if node $v_1$ has the selected atoms, then $v_1$ must have determined these atoms based on $\mat{A}_1$ and less than $\log T$ bits. By Claim~\ref{cla:more}, there are $T$ different instances of $\mat{A}_2$ with different $\mathfrak{A}(\mat{A}_1, \mat{A}_2) \cap \mat{A}_2$. For at least two of these instances,  $v_1$ will output the same set of atoms, which is contradictory. A similar argument holds if node $v_2$ has the selected atoms. Therefore, the algorithm uses at least $\log T$ bits.
\end{proof}

\subsubsection{Remarks} The same lower bound holds for $\ell_1$-constrained problems by reduction to the simplex case. The analysis also extends to networks with more than two nodes, in which case the diameter of the network is a multiplicative factor in the bound. Details can be found in Appendix~\ref{app:bound}.

The upper bound given by dFW in Theorem~\ref{thm:dfw} matches the lower bound of Theorem~\ref{thm:lower} in its dependency on $\epsilon$ and $d$. This leads to the important result that the communication cost of dFW is worst-case optimal in these two quantities. %Note that dFW has some additional dependency on the topology because it broadcasts each selected atom to {\it all} nodes. It is not clear how to lower this dependency down to the network diameter while retaining the guarantees in $\epsilon$.

\section{Related Work}
\label{sec:related}

%Distributed optimization has attracted a lot of interest over the past few years. Methods based on subgradient descent (SGD) (see \cite{Dekel2012,Duchi2012} and references therein) or Stochastic Dual Coordinate Ascent (SDCA) \cite{Yang2013} deal with the optimization of a function $f(\vct{\alpha}) = \sum_{i=1}^N f_i(\vct{\alpha})$ where $f_i$ is convex and only known to node $i$. This typically corresponds to the risk minimization setting with distributed training examples, where $f_i$ gives the loss for the data points located on node $i$. Their rate of convergence (and thus their communication cost) for general convex functions scales as $O(1/\epsilon)$ or $O(1/\epsilon^2)$, sometimes with a dependency on $n$, the total number of data points. 

\subsection{Distributed Examples}

An important body of work considers the distributed minimization of a function $f(\vct{\alpha}) = \sum_{i=1}^N f_i(\vct{\alpha})$ where $f_i$ is convex and only known to node $i$. This typically corresponds to the risk minimization setting with distributed training examples, where $f_i$ gives the loss for the data points located on node $i$. These methods are based on subgradient descent (SGD) \citep[][and references therein]{Duchi2012,Dekel2012}, the Alternating Direction Methods of Multipliers (ADMM) \citep{Boyd2011,Wei2012} and Stochastic Dual Coordinate Ascent (SDCA) \citep{Yang2013}. Their rate of convergence (and thus their communication cost) for general convex functions is in $O(1/\epsilon)$ or $O(1/\epsilon^2)$, sometimes with a dependency on $n$, the total number of data points. 

Note that for Kernel SVM (Section~\ref{sec:app}), none of the above methods can be applied efficiently, either because they rely on the primal (GD, ADMM) or because they require broadcasting at each iteration a number of data points that is at least the number of nodes in the network (SDCA). In contrast, dFW only needs to broadcast a single point per iteration and still converges in $O(1/\epsilon)$, with no dependency on $n$. We illustrate its performance on this task in the experiments.

\subsection{Distributed Features}

In the case of distributed features, ADMM is often the method of choice due to its wide applicability and good practical performance \citep{Boyd2011}.\footnote{Note that to the best of our knowledge, no convergence rate is known for ADMM with distributed features.}
Given $N$ nodes, the parameter vector $\vct{\alpha}$ is partitioned as $\vct{\alpha} = [\vct{\alpha}_1,\dots,\vct{\alpha}_N]$ with $\vct{\alpha}_i \in \mathbb{R}^{n_i}$, where $\sum_{i=1}^N n_i = n$. The matrix $\mat{A}$ is partitioned conformably as $\mat{A} = [\mat{A}_1 \dots \mat{A}_N]$ and the optimization problem has the form
\begin{equation}
\label{eq:admm}
\begin{aligned}
\min_{\vct{\alpha}\in \mathbb{R}^n} &&& g\left(\sum_{i=1}^N\mat{A}_i\vct{\alpha}_i - \vct{y}\right) + \lambda R(\vct{\alpha}),
\end{aligned}
\end{equation}
where the loss $g$ is convex, the regularizer $R$ is separable across nodes and convex, and $\lambda>0$. Each node $v_i$ iteratively solves a local subproblem and sends its local prediction $\mat{A}_i\vct{\alpha}_i$ to a coordinator node, which then broadcasts the current global prediction $\sum_{i=1}^N\mat{A}_i\vct{\alpha}_i$. ADMM can be slow to converge to high accuracy, but typically converges to modest accuracy within a few tens of iterations \citep{Boyd2011}.

Both ADMM and dFW can deal with LASSO regression with distributed features. When features are dense, dFW and ADMM have a similar iteration cost but dFW is typically slower to converge as it only adds one feature to the solution at each iteration. On the other hand, the case of sparse features creates an interesting tradeoff between the number of iterations and the communication cost per iteration. Indeed, an iteration of dFW becomes much cheaper in communication than an iteration of ADMM since local/global predictions are typically dense. This tradeoff is studied in the experiments.

\subsection{Relation to Matching Pursuit}

Finally, note that for the $\ell_1$-constrained case, matching pursuit \citep{Tropp2004} greedily selects atoms in the same way as FW, although the updates are different \citep{Jaggi2011}. The proposed communication schemes thus readily apply to a distributed matching pursuit algorithm.

\section{Experiments}
\label{sec:exp}

We have shown that dFW enjoys strong theoretical guarantees on the communication cost. 
In this section, we validate our analysis with a series of experiments on two tasks: LASSO regression with distributed features, and Kernel SVM with distributed examples. In Section~\ref{sec:base}, we compare dFW to two baseline strategies. Section~\ref{sec:admm} studies the communication tradeoff between dFW and ADMM, a popular distributed optimization technique. Finally, Section~\ref{sec:large} proposes an evaluation of the performance of (exact and approximate) dFW in a real-world distributed environment.

%\subsection{Summary of Main Results}
%
%We have shown that dFW enjoys strong theoretical guarantees on the communication cost. 
%In this section, we validate our analysis with a series of experiments on two tasks: LASSO regression with distributed features, and Kernel SVM with distributed examples. Our main results are as follows:
%\begin{enumerate}
%\item dFW decreases the objective better than two baselines that select atoms locally.
%\item When data and/or the solution are sparse, dFW requires less communication than distributed ADMM \cite{Boyd2011}, a popular competing method for distributed optimization.
%\item dFW and its approximate variant perform well in a real-world distributed architecture.
%\item dFW is fairly robust to random communication drops and asynchronous updates.
%\end{enumerate}
%
%Due to the lack of space, detailed experiments for 1-2 on both synthetic and real data are in the supplementary material. In the following, we describe our results for 3-4.

\subsection{Comparison to Baselines}
\label{sec:base}

We first compare dFW to some baseline strategies for selecting a subset of atoms locally. We investigate two methods: (i) the random strategy, where each node selects a fixed-size subset of its atoms uniformly at random, and (ii) the local FW strategy, where each node runs the FW algorithm locally to select a fixed-sized subset as proposed by \citet{Lodi2010}. The objective function on the union of the selected subsets is then optimized with a batch solver. 
We compare the methods based on the objective value they achieve for a given communication cost.
In these experiments, we use a star graph with $N=100$ nodes and each atom is assigned to a node uniformly at random. Results are averaged over 5 runs.

\begin{figure*}[t]
\centering
\subfigure[Kernel SVM, Adult dataset]{
\includegraphics[width=0.35\columnwidth]{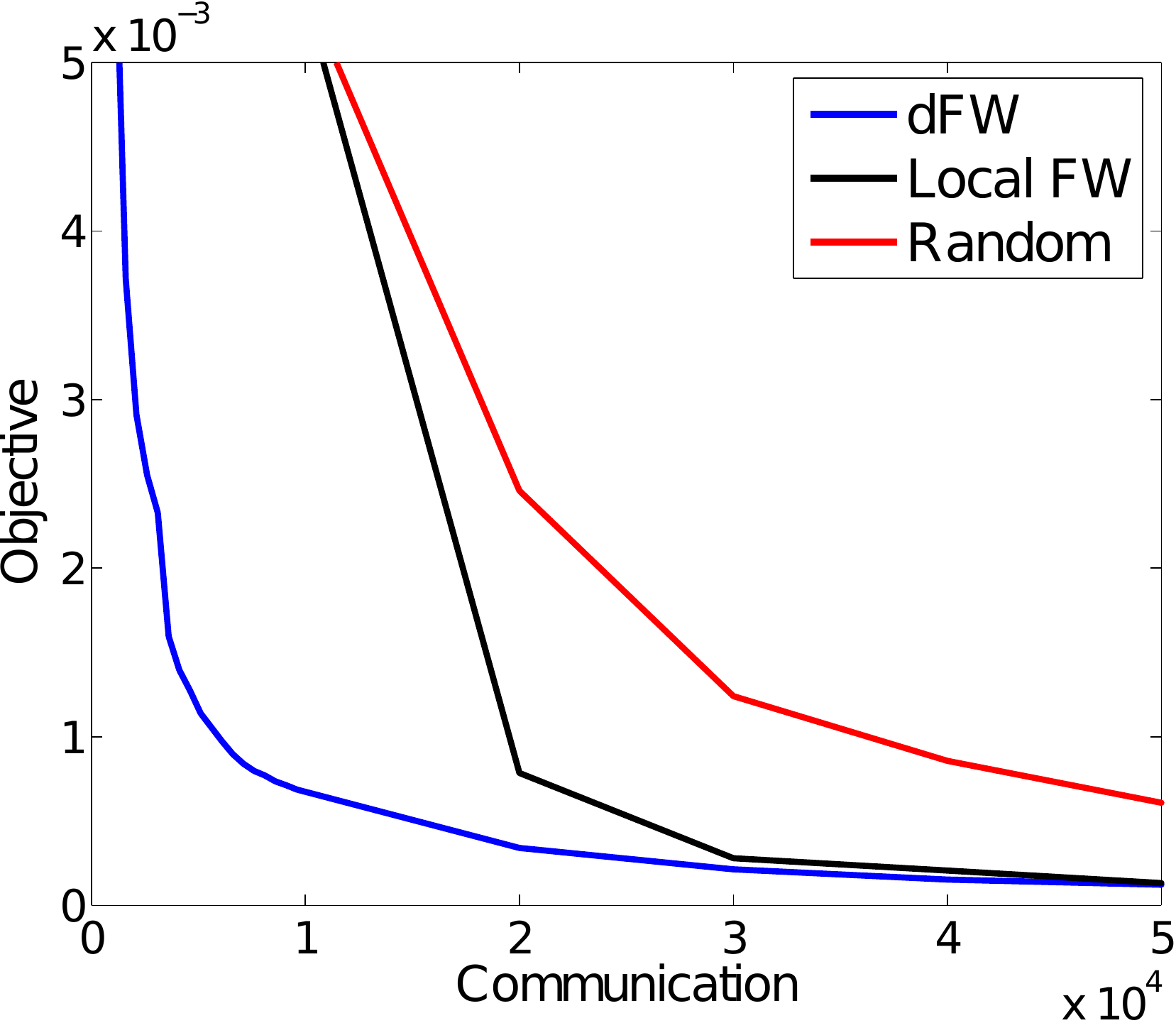}
\label{fig:svmadult}
}
\subfigure[LASSO, Dorothea dataset]{
\includegraphics[width=0.35\columnwidth]{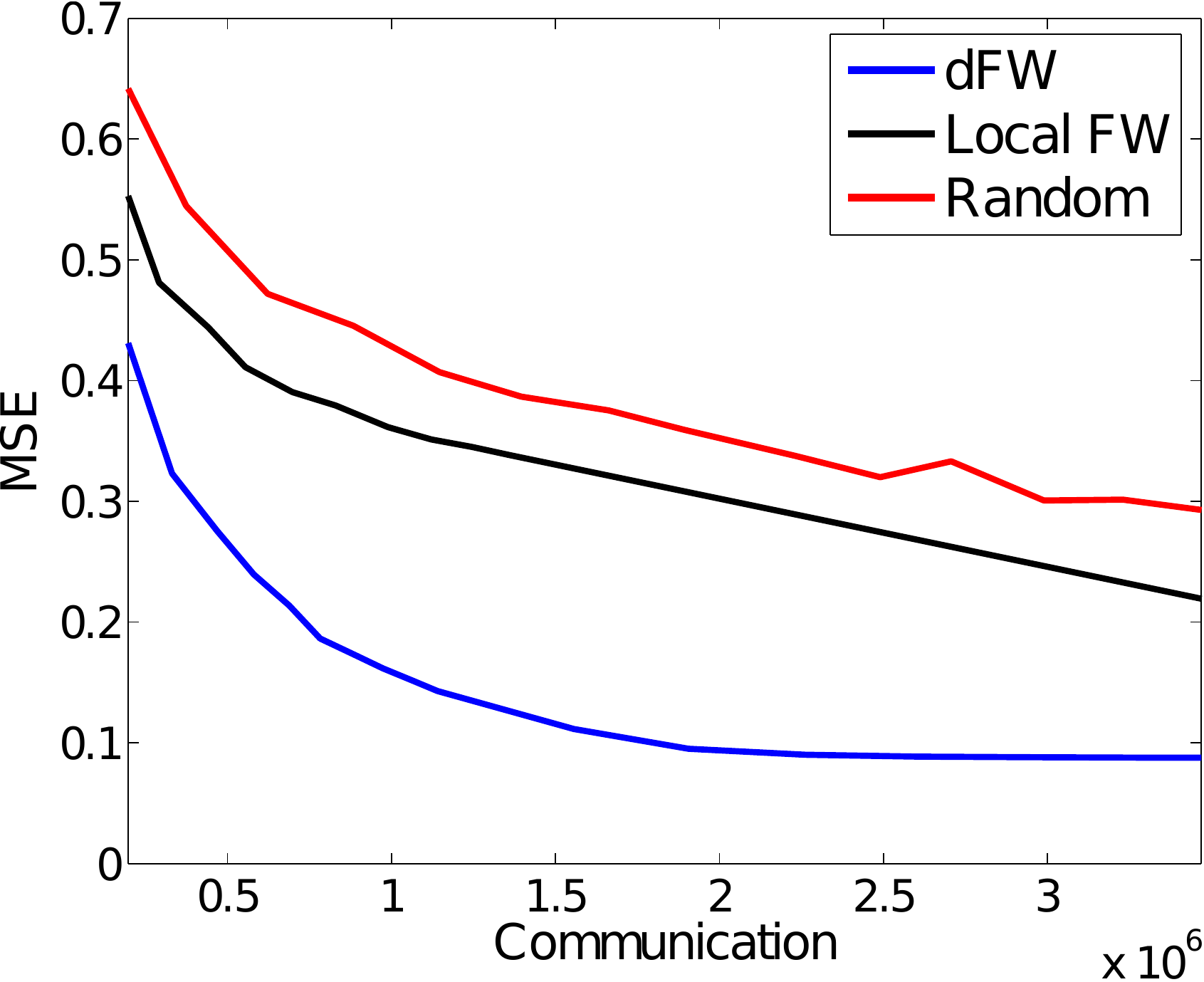}
\label{fig:lassobase}
}
\caption{Comparison with baseline strategies (best seen in color).}
\vspace{-2mm}
\end{figure*}

\paragraph{Kernel SVM with distributed examples}

We conduct experiments on the UCI Adult dataset, a census income database with about 32,000 examples and 123 features, so each node holds about 320 examples. We use the RBF kernel and set its bandwidth based on the averaged distance among examples and fix SVM parameter $C$  to $100$. Figure~\ref{fig:svmadult} shows that the training examples selected by dFW lead to significantly smaller objective value than those selected by both baselines. Note that the local FW does much better than the random strategy but performs worse than dFW due to its local selection criterion.

\paragraph{LASSO regression with distributed features}

We also conduct experiments on Dorothea, a drug discovery dataset from the NIPS 2003 Feature Selection Challenge \citep{Guyon2004}. It has 1,150 examples and 100,000 binary features (half of which are irrelevant to the problem), so each node holds about 1,000 features.
Figure~\ref{fig:lassobase} shows the results for $\beta=16$ (other values of $\beta$ do not change the relative performance of the methods significantly). Again, dFW significantly outperforms both baselines. On this dataset, these local strategies decrease the objective at a slow rate due the selection of irrelevant features.

\subsection{Comparison to ADMM}
\label{sec:admm}

In this section, we compare dFW with distributed ADMM \citep{Boyd2011}, a popular competing method, on the problem of LASSO with distributed features, both on synthetic and real data. As in the previous experiment, we use $N=100$ nodes and a uniform distribution of atoms, with results averaged over 5 runs.

\paragraph{Synthetic Data}

\begin{figure}[t]
\centering
\includegraphics[width=0.8\columnwidth]{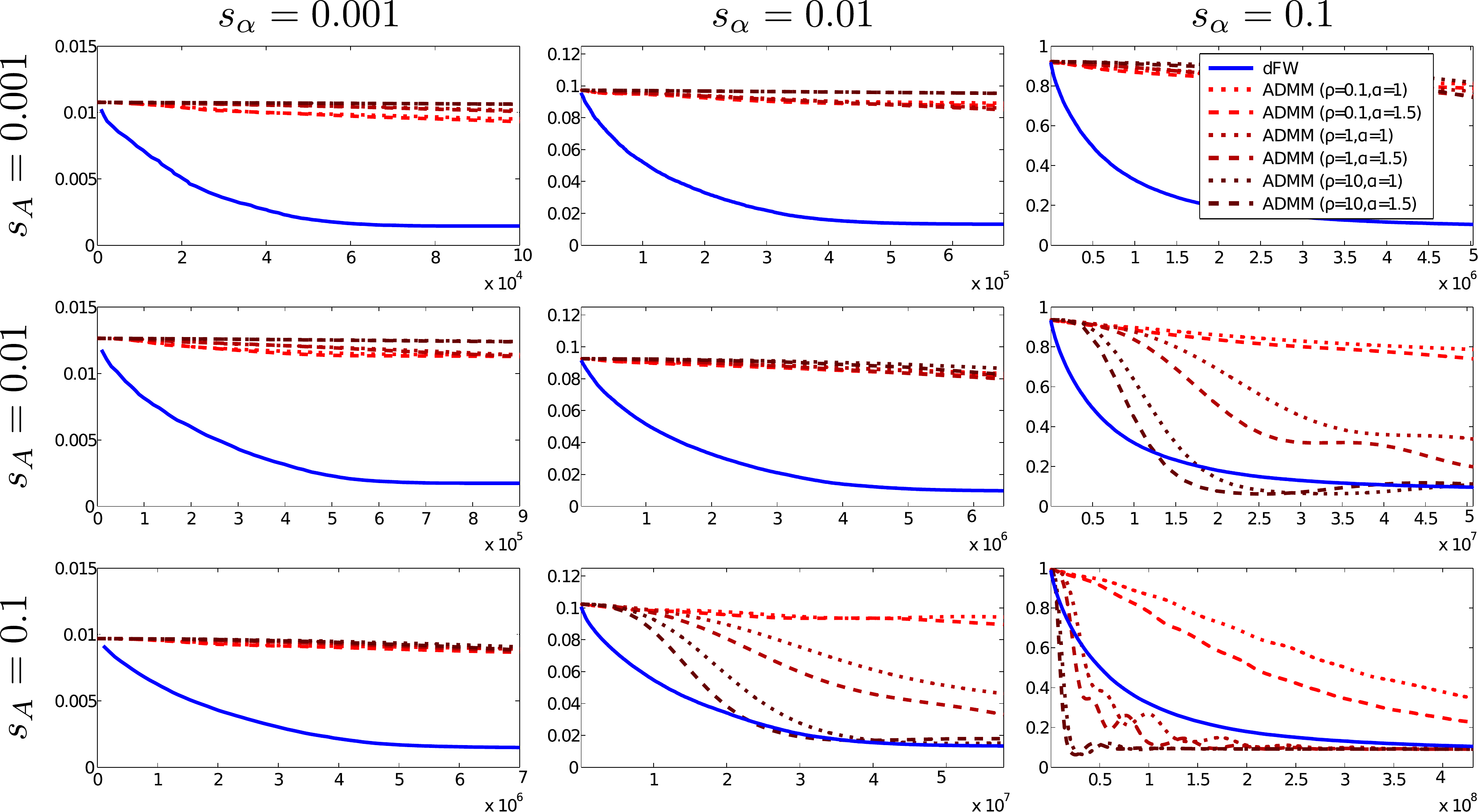}
\caption{Comparison with ADMM on LASSO using synthetic data of varying sparsity levels.
The plots show MSE versus communication.}
\label{fig:synth}
\end{figure}

The purpose of this experiment is to study the tradeoff between dFW and ADMM discussed in Section~\ref{sec:related}. To do so, we generate synthetic data following the protocol in \citep{Boyd2011} but with varying density of the data and the solution. 
Specifically, we create a data matrix $\mat{A}$ has $d=10,000$ examples and $n=100,000$ features (atoms) with density $s_A$ and its nonzero entries are drawn from $\mathcal{N}(0,1)$. The true solution $\vct{\alpha}^{true}$ has density $s_\alpha$ with each nonzero entry drawn from $\mathcal{N}(0,1)$, and the target vector is computed as $\vct{y} = \mat{A}\vct{\alpha}^{true} + \eta$, where $\eta\sim\mathcal{N}(0,10^{-3}\mat{I})$. We use $s_A,s_\alpha\in\{0.001,0.01,0.1\}$, thereby creating a set of 9 problems.
We set the regularization parameter to $\lambda=0.1\lambda_{max}$, where $\lambda_{max} = \|\mat{A}\vct{y}\|_\infty$ is the value above which the solution becomes the zero vector.\footnote{This choice results in a solution with a number of nonzero entries in the same order of magnitude as $s_\alpha$.} The corresponding value of $\beta$ for the constrained problem is obtained from the norm of the solution to the $\lambda$-regularized problem. For ADMM, we try several values of its penalty and relaxation parameters $\rho_{ADMM}$ and $\alpha_{ADMM}$, as done in \citep{Boyd2011}.

Figure~\ref{fig:synth} shows the MSE versus communication. These results confirm that dFW requires much less communication when the data and/or the solution are sparse (with no parameter to tune), while ADMM has the edge in the dense case (if its parameters are properly tuned). The rule of thumb seems to be that they perform similarly when $s_As_\alpha n = O(100)$, which is consistent with their communication cost per iteration and practical convergence speed.

\paragraph{Real data}

\begin{figure*}[t]
\centering
\includegraphics[width=0.32\columnwidth]{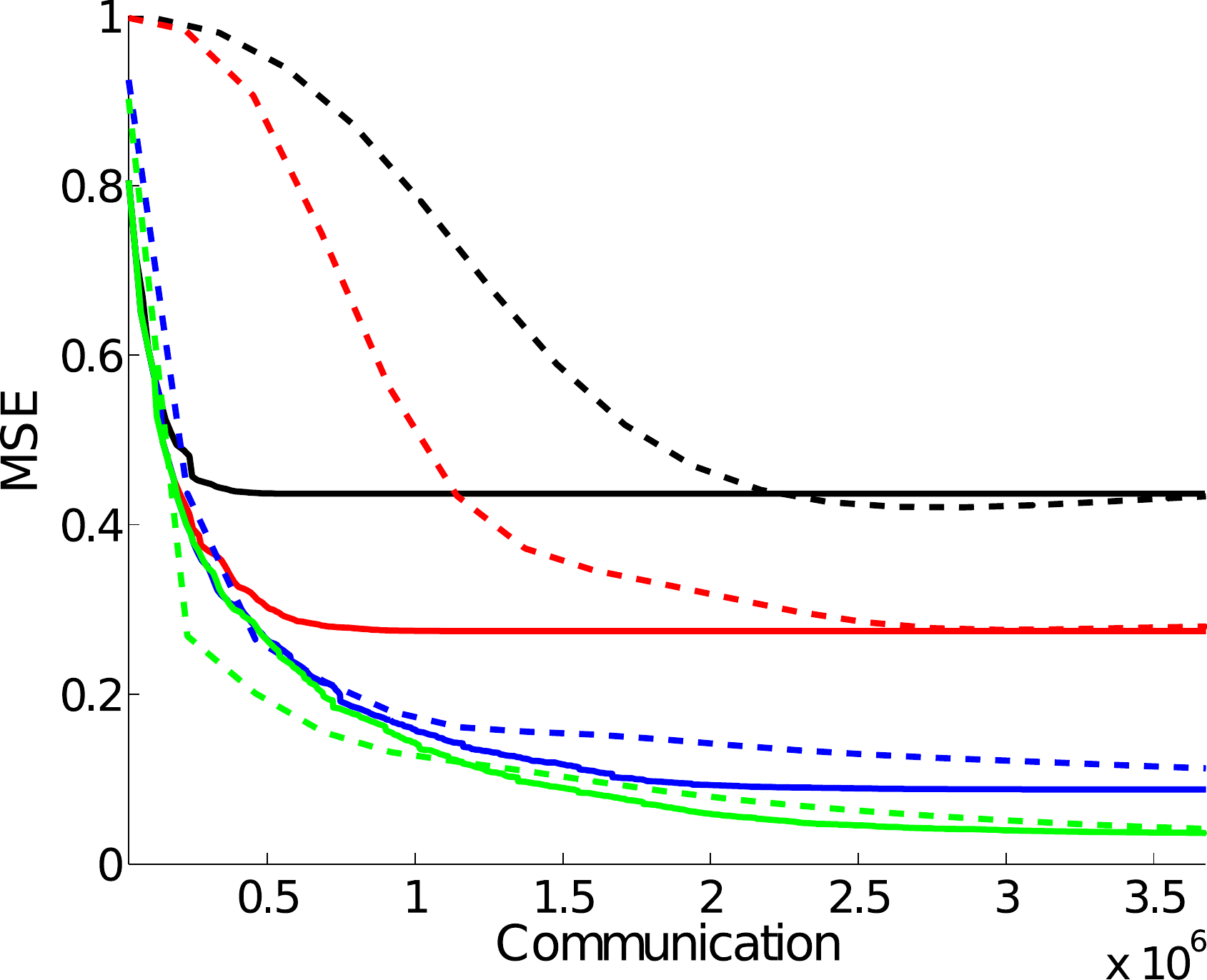}
\includegraphics[width=0.32\columnwidth]{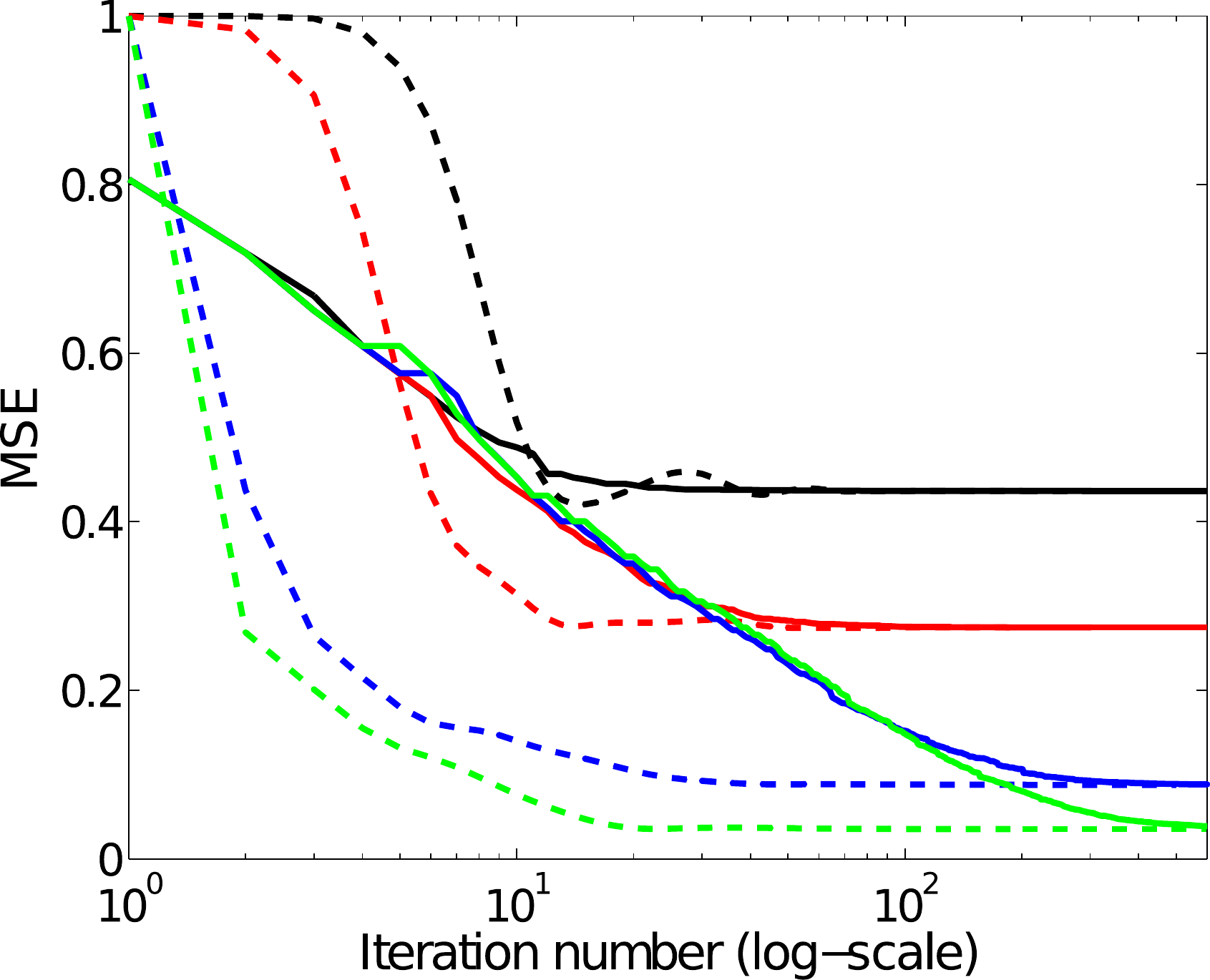}
\includegraphics[width=0.32\columnwidth]{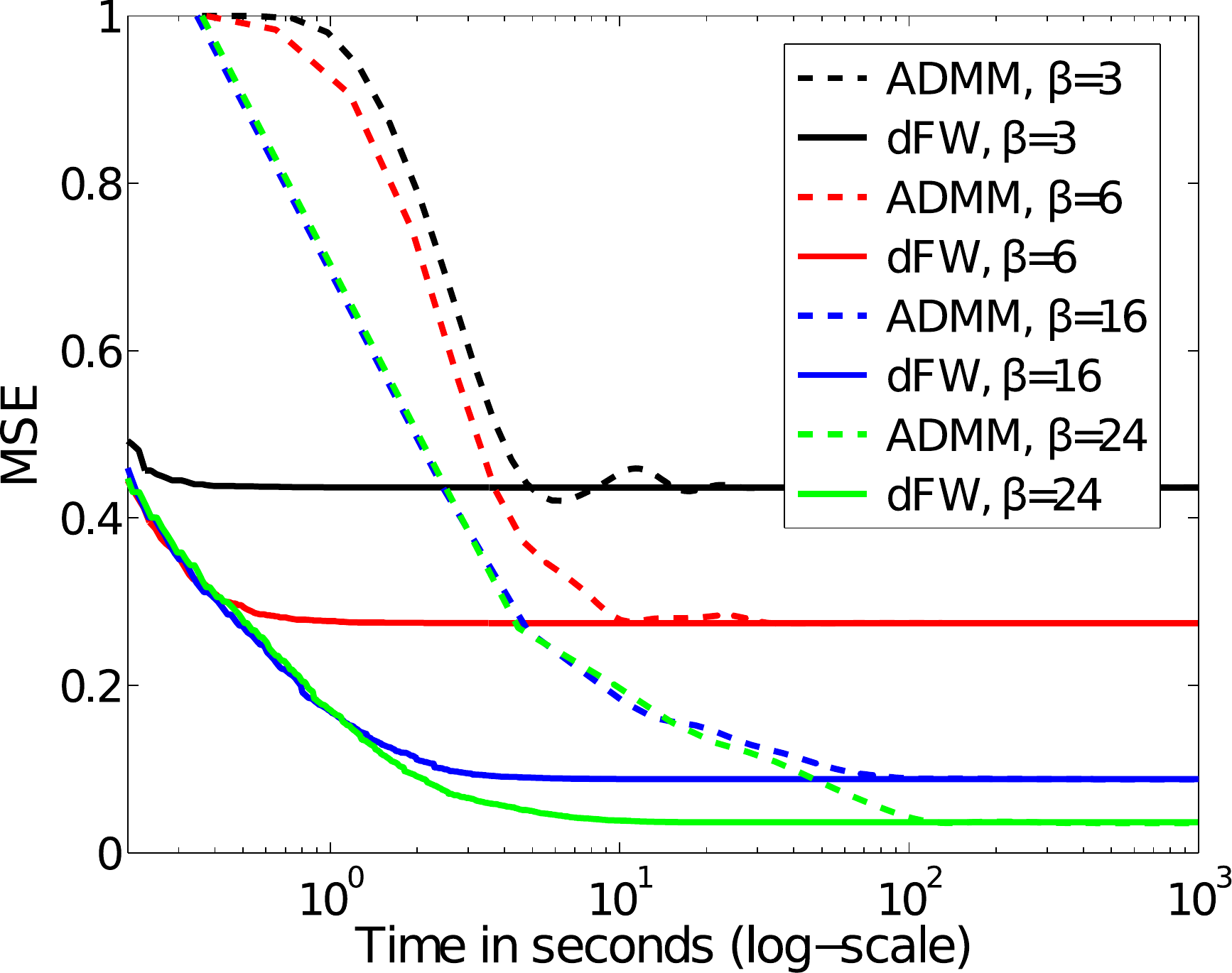}
\caption{Comparison with ADMM on LASSO, Dorothea dataset (best seen in color).}
\label{fig:lassoadmm}
\vspace{-2mm}
\end{figure*}

We also conduct experiments on the Dorothea dataset used in Section~\ref{sec:base}.
We compare dFW to ADMM for different values of $\beta$, which affect the sparsity of the solution. The results shown for ADMM are the best over its penalty $\rho\in \{0.1,1,10\}$ and relaxation $\alpha\in\{1,1.5\}$ parameters. 

The results in terms of communication cost shown in Figure~\ref{fig:lassoadmm} (left plot) are consistent with those obtained on synthetic data. Indeed, dFW is better at finding sparse solutions with very little communication overhead, even though many features in this dataset are irrelevant. On the other hand, as the target model becomes more dense ($\beta=24$), ADMM catches up with dFW. %We observed that a larger $N$ increases the gap between dFW and ADMM. %We provide a brief discussion of the relative merits of dFW and ADMM in practical distributed scenarios in Appendix~\ref{app:admm}.

Figure~\ref{fig:lassoadmm} (middle plot) shows that ADMM generally converges in fewer iterations than dFW. However, an ADMM iteration is computationally more expensive as it requires to solve a LASSO problem on each node. In contrast, a dFW iteration amounts to computing the local gradients and is typically much faster. To give a rough idea of the difference, we show the time spent on sequentially solving the local subproblems on a single 2.5GHz CPU core in Figure~\ref{fig:lassoadmm} (right plot).\footnote{Both algorithms are implemented in Matlab. For ADMM, LASSO subproblems are solved using a proximal gradient method.} This shows that on this dataset, dFW decreases the objective faster than ADMM by 1 to 2 orders of magnitude despite its larger number of iterations.

\subsection{Large-scale Distributed Experiments}
\label{sec:large}

In this section, we evaluate the performance of dFW on large-scale kernel SVM in real-world distributed conditions.
Note that dFW is computationally very efficient if implemented carefully. In particular, the local gradient of a node $v_i$ can be recursively updated using only the kernel values between its $n_i$ points and the point received at the previous iteration. Thus the algorithm only needs $O(n_i)$ memory and the computational cost of an iteration is $O(n_i)$. We implemented dFW in C++ with openMPI and experiment with a fully connected network with $N\in\{1, 5, 10, 25, 50\}$ nodes which communicate over a 56.6-gigabit infrastructure. Each node is a single 2.4GHz CPU core of a separate host. We use a speech dataset of 8.7M examples with 120 features and 42 class labels (the task is majority class versus rest).\footnote{Data is from IARPA Babel Program, Cantonese language collection release IARPA-babel101b-v0.4c limited data pack.} In all experiments, we use the RBF kernel and set its bandwidth based on the averaged distance among examples. The parameter $C$ of SVM is set to $100$.

\begin{figure*}[t]
\centering
\subfigure[Exact dFW on ideal distribution]{
\includegraphics[width=0.33\columnwidth]{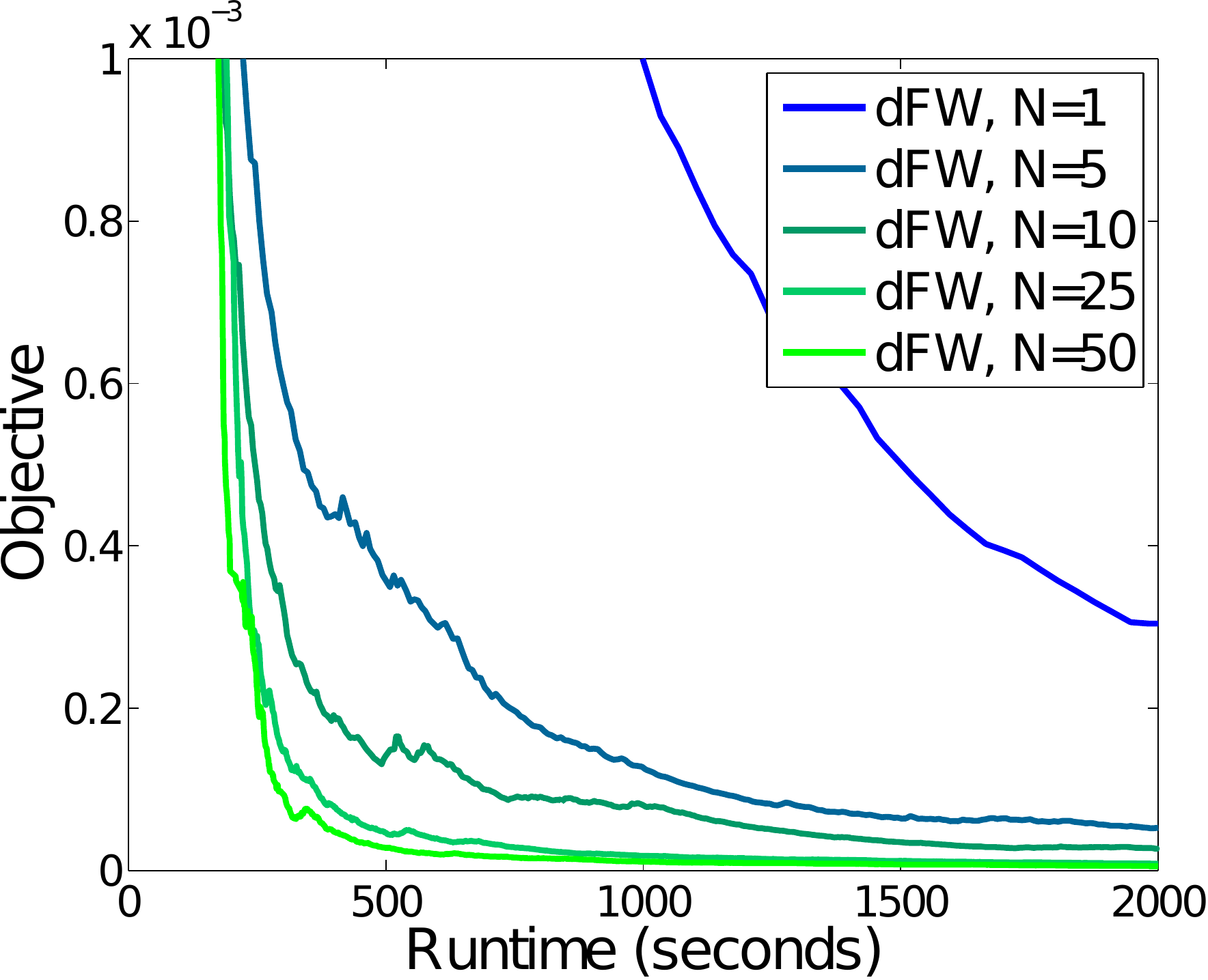}\hspace{-0.3cm}
\label{fig:mpi}
}
\subfigure[Approx. dFW to balance costs]{
\includegraphics[width=0.33\columnwidth]{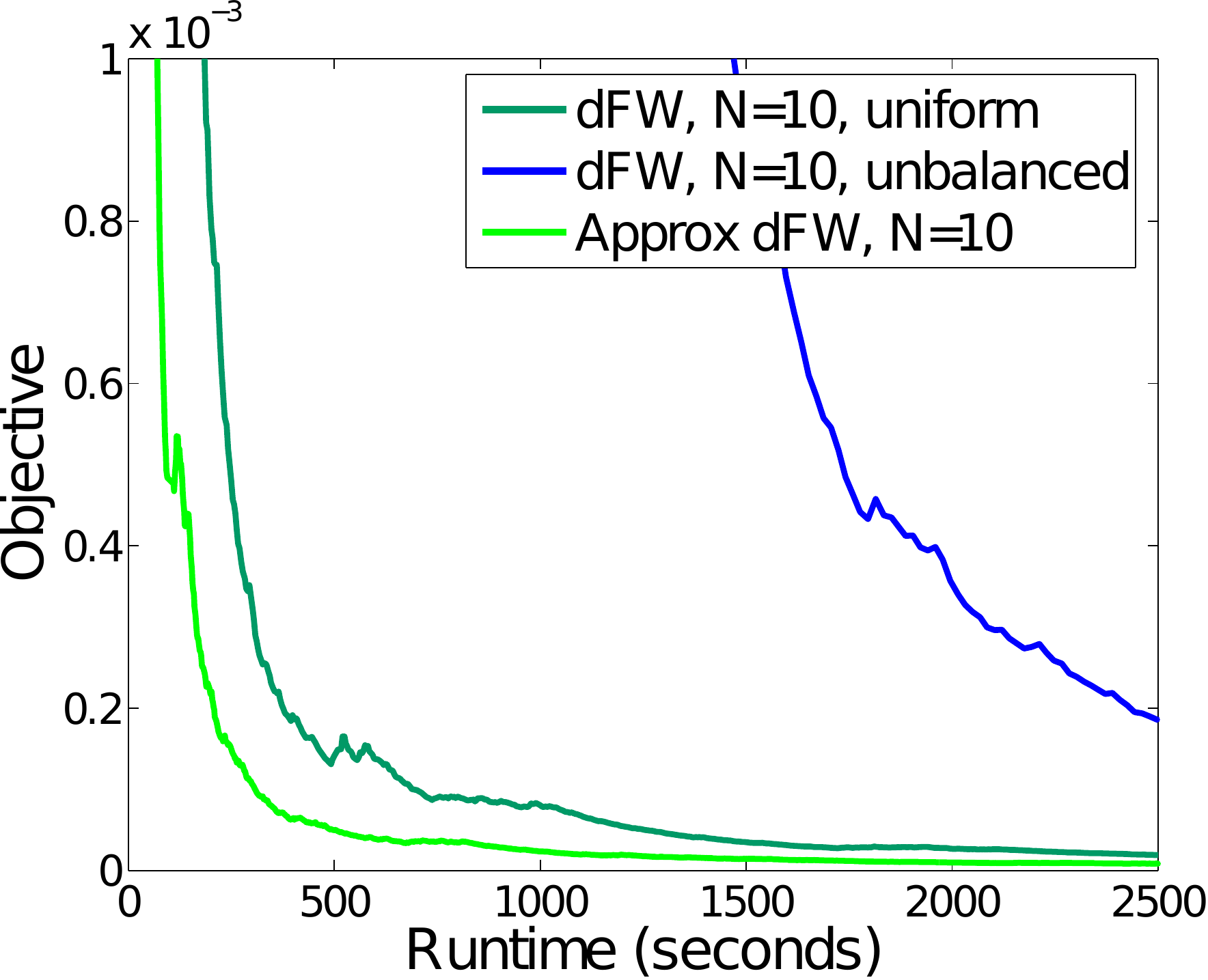}\hspace{-0.3cm}
\label{fig:mpi_approx}
}
\subfigure[dFW under asynchrony]{
\includegraphics[width=0.33\columnwidth]{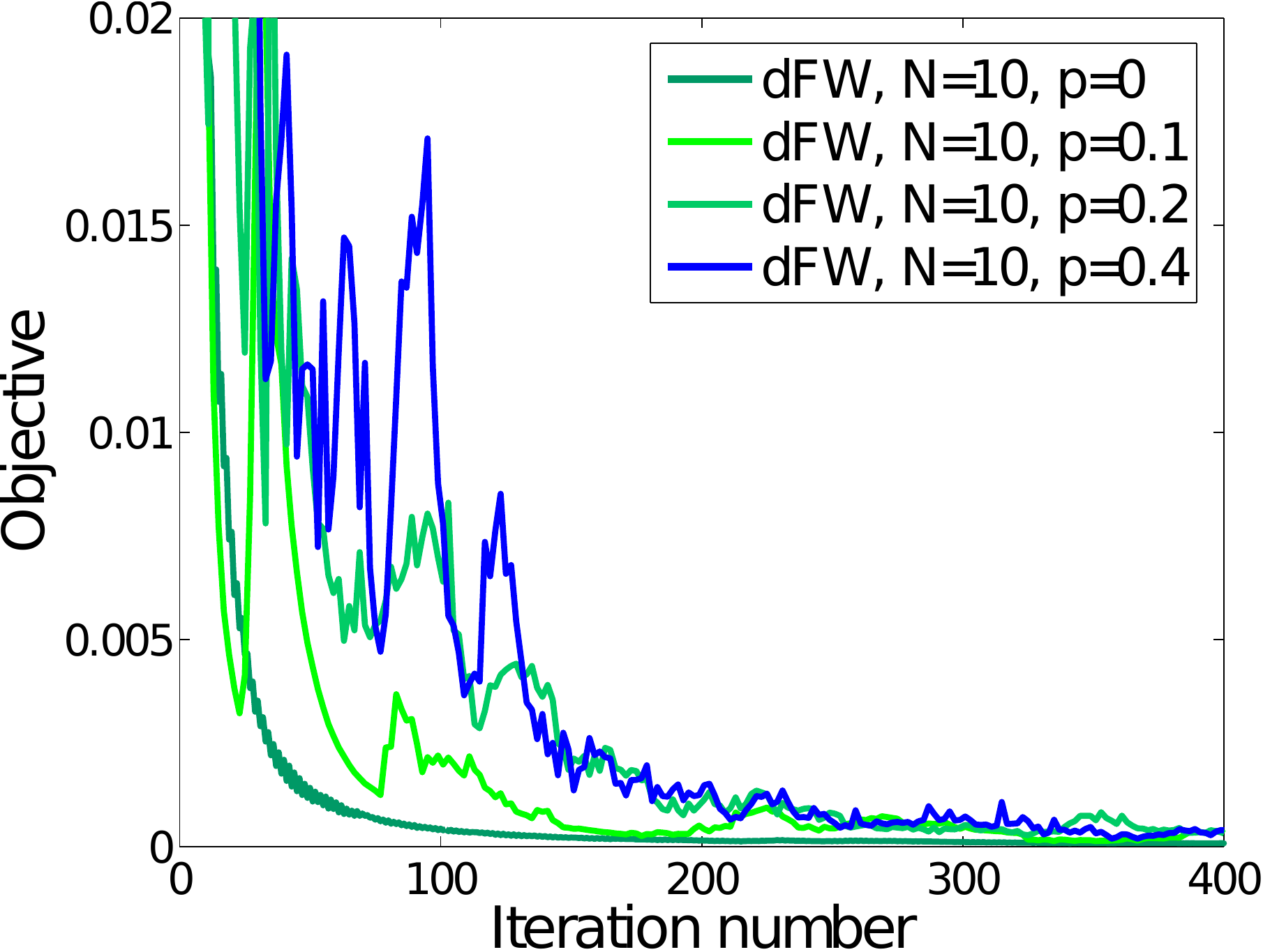}
\label{fig:dropcom}
}
\caption{Empirical evaluation of our theoretical analysis on distributed Kernel SVM (best seen in color).}
\vspace{-2mm}
\end{figure*}

\paragraph{Exact dFW} We first investigate how dFW scales with the number of nodes in the network. For this experiment, training points are assigned to nodes uniformly at random.
Figure~\ref{fig:mpi} shows the runtime results for different values of $N$. We obtain near-linear speedup, showing that synchronization costs are not a significant issue when the local computation is well-balanced across nodes.

\paragraph{Approximate variant} When the local costs are misaligned, the exact dFW will suffer from large wait time. As we have seen in Section~\ref{sec:approx}, we can use the approximate variant of dFW to balance these costs. To illustrate this, we set $N=10$ and assign about 50\% of the data to a single node while others node share the rest uniformly at random. We use the greedy clustering algorithm on the large node to generate a number of centers approximately equal to the number of atoms on the other nodes, thus balancing local computation.\footnote{For kernel SVM with RBF kernel, the difference between gradient entries of $f$ can be directly bounded by performing clustering in the augmented kernel space. This is because $|\nabla f(\vct{\alpha})_i - \nabla f(\vct{\alpha})_{j}| \leq 2\max_{\vct{z}_k} |\tilde{k}(\vct{z}_i,\vct{z}_k) - \tilde{k}(\vct{z}_{j},\vct{z}_k) | \leq \|\tilde{\varphi}(\vct{z}_i) - \tilde{\varphi}(\vct{z}_{j})\|^2$.} Figure~\ref{fig:mpi_approx} shows that the approximate dFW reduces the runtime dramatically over exact dFW on the unbalanced distribution. As often the case in large-scale datasets, training examples cluster well, so the additive error is negligible in this case.

\paragraph{Asynchronous updates} Another way to reduce waiting times and synchronization costs is to use asynchronous updates. To simulate this, we randomly drop the communication messages with some probability $p>0$. This means that at any given iteration: (i) the selected atom may be suboptimal, and (ii) some nodes may not update their iterate. As a result, the iterates across nodes are not synchronized anymore. Figure~\ref{fig:dropcom} shows the objective value averaged over all the nodes at each iteration. We can see that dFW is robust to this asynchronous setting and converges properly (albeit a bit slower), even with 40\% random communication drops. This suggests that an asynchronous version of dFW could lead to significant speedups in practice. A theoretical analysis of the algorithm in this challenging setting is an exciting direction for future work.

\section{Conclusion}
\label{sec:conclu}

We studied the problem of finding sparse combinations of distributed elements, focusing on minimizing the communication overhead. To this end, we proposed a distributed  Frank-Wolfe algorithm with favorable properties in communication cost. We showed a lower-bound to the communication cost for  the family of problems we consider. This established that the proposed dFW is optimal in its dependency on the approximation quality $\epsilon$. Experiments confirmed the practical utility of the approach. %An interesting open question is whether there exists an algorithm that fully matches the lower bound, or if the lower bound can be improved.

\paragraph{Acknowledgments} 
This research is partially supported by NSF grants CCF-0953192 and CCF-1101215, AFOSR grant FA9550-09-1-0538, ONR grant N00014-09-1-0751, a Google Research Award, a Microsoft Research Faculty Fellowship and the IARPA via DoD/ARL contract \# W911NF-12-C-0012. The U.S. Government is authorized to reproduce and distribute reprints for Governmental purposes notwithstanding any copyright annotation thereon. The views and conclusions contained herein are those of the authors and should not be interpreted as necessarily representing the official policies or endorsements, either expressed or implied, of IARPA, DoD/ARL, or the U.S. Government.

\bibliographystyle{plainnat}
\bibliography{arXiv_main}

% !TEX root = arXiv_main.tex

\begin{appendices}

\section{Proof of Theorem~\ref{thm:dfw}}
\label{app:dfw}

We first prove the following claim.

\begin{claim}
\label{claim:exec}
At each iteration of dFW, each node has enough information to execute the algorithm. The communication cost per iteration is $O(Bd+NB)$, where $B$ is an upper bound on the cost of broadcasting a real number to all nodes in the network $G$.
\end{claim}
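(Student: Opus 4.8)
The plan is to prove Claim~\ref{claim:exec} by induction on the iteration index $k$, maintaining as the inductive invariant that at the start of iteration $k$ every node $v_i$ knows (a) the current iterate $\vct{\alpha}^{(k)}$ and (b) all atoms $\vct{a}_{j^{(0)}},\dots,\vct{a}_{j^{(k-1)}}$ selected in previous iterations, in addition to its own local data $\{\vct{a}_j : j\in\mathcal{A}_i\}$. The base case $k=0$ is immediate since $\vct{\alpha}^{(0)}=\vct{0}$ is set on all nodes and no atoms have yet been selected.

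For the inductive step, the key observation — as hinted in the proof sketch of Theorem~\ref{thm:dfw} — is that the local gradient component $\nabla f(\vct{\alpha}^{(k)})_j = \vct{a}_j^{\T}\nabla g(\mat{A}\vct{\alpha}^{(k)})$ for $j\in\mathcal{A}_i$ depends only on $\vct{a}_j$ and on $\mat{A}\vct{\alpha}^{(k)}$. First I would argue that $\mat{A}\vct{\alpha}^{(k)}$ is computable on every node: since $\vct{\alpha}^{(0)}=\vct{0}$ and each FW update replaces $\vct{\alpha}^{(k)}$ by $(1-\gamma)\vct{\alpha}^{(k)}+\gamma\,\sgn[-g^{(k)}_{i^{(k)}}]\beta\vct{e}^{j^{(k)}}$, the vector $\mat{A}\vct{\alpha}^{(k)}$ is a known linear combination of the previously selected atoms $\vct{a}_{j^{(0)}},\dots,\vct{a}_{j^{(k-1)}}$, all of which are available by the inductive hypothesis; hence $\nabla g(\mat{A}\vct{\alpha}^{(k)})$ and therefore every entry $\nabla f(\vct{\alpha}^{(k)})_j$ with $j\in\mathcal{A}_i$ can be computed locally. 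This justifies step 3: each node can compute $j^{(k)}_i$, $g^{(k)}_i$, and the partial sum $S^{(k)}_i=\sum_{j\in\mathcal{A}_i}\alpha^{(k)}_j\nabla f(\vct{\alpha}^{(k)})_j$ (note $\alpha^{(k)}_j=0$ for $j$ outside the previously selected indices, so this sum is also well-defined locally). After the broadcasts in step 3, every node knows all $g^{(k)}_i$ and $S^{(k)}_i$, so it can compute $i^{(k)}=\argmax_i|g^{(k)}_i|$ in step 4; the winning node then broadcasts $j^{(k)}$ and $\vct{a}_{j^{(k)}}$, so after step 4 every node knows the new atom, and step 5 lets every node update to $\vct{\alpha}^{(k+1)}$. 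This re-establishes the invariant at iteration $k+1$ and also shows the stopping criterion $\sum_i S^{(k)}_i+\beta|g^{(k)}_{i^{(k)}}|\le\epsilon$ is evaluable on every node.

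For the communication count, I would tally per iteration: step 3 broadcasts two real numbers ($g^{(k)}_i$ and $S^{(k)}_i$) from each of the $N$ nodes, costing $2NB$; step 4 broadcasts one index $j^{(k)}$ (a single real number, cost $B$) and one atom $\vct{a}_{j^{(k)}}\in\mathbb{R}^d$ (cost $dB$); step 5 is local. The total is $dB + (2N+1)B = O(Bd+NB)$, as claimed.

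I do not expect a serious obstacle here — the argument is essentially bookkeeping — but the one point that needs care is making the inductive invariant precise enough that ``$\mat{A}\vct{\alpha}^{(k)}$ is locally computable'' genuinely follows, and in particular noticing that one must track not just the selected atoms but also the scalar coefficients they receive under the FW recursion (these coefficients are determined by the publicly known sequence of $\gamma$'s and signs $\sgn[-g^{(k)}_{i^{(k)}}]$, so they are indeed common knowledge). A secondary subtlety, if line-search is used for $\gamma$ rather than the fixed schedule $\gamma=2/(k+2)$, is to observe that the line-search objective $f((1-\gamma)\vct{\alpha}^{(k)}+\gamma\vct{s}^{(k)}) = g((1-\gamma)\mat{A}\vct{\alpha}^{(k)}+\gamma\beta\,\sgn[\cdot]\vct{a}_{j^{(k)}})$ depends only on quantities already known to all nodes, so the optimal $\gamma$ is again computed identically everywhere and no extra communication is incurred. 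Theorem~\ref{thm:dfw} then follows by combining Claim~\ref{claim:exec} with Theorem~\ref{thm:fwconv}: the updates are exactly FW updates for problem~\eqref{eq:sga}, so termination occurs within $O(C_f/\epsilon)=O(1/\epsilon)$ iterations, giving total communication $O((Bd+NB)/\epsilon)$ and a final iterate with $O(1/\epsilon)$ nonzero entries whose atoms are held by all nodes.
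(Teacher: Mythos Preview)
Your proposal is correct and follows essentially the same approach as the paper: both identify that $\nabla f(\vct{\alpha}^{(k)})_j = \vct{a}_j\T\nabla g(\mat{A}\vct{\alpha}^{(k)})$ and that $\mat{A}\vct{\alpha}^{(k)}$ depends only on the previously broadcast atoms $\vct{a}_{j^{(0)}},\dots,\vct{a}_{j^{(k-1)}}$, then tally the per-iteration communication identically. Your explicit inductive framing and the remark on the line-search case are slightly more detailed than the paper's version, but the argument is the same.
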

\begin{proof}
Let $k\geq 0$ and $v_i\in V$. We show that $v_i$ has enough information to compute its local gradient at iteration $k$. The $j^{\text{th}}$ entry of its local gradient is given by
\begin{equation*}
\label{eq:localgrad}
\nabla f(\vct{\alpha}^{(k)})_j = \left[\mat{A}\T\nabla g(\mat{A}\vct{\alpha}^{(k)})\right]_j = \vct{a}_j\T\nabla g(\mat{A}\vct{\alpha}^{(k)}),
\end{equation*}
where $j\in \mathcal{A}_i$. Notice that $\vct{\alpha}^{(k)}$ is a linear combination of $\{\vct{e}^{j^{(0)}},\vct{e}^{j^{(1)}},\dots,\vct{e}^{j^{(k-1)}}\}$, hence
$$\supp(\vct{\alpha}^{(k)}) = \left\{j^{(0)},j^{(1)},\dots,j^{(k-1)}\right\}.$$
Therefore, $\mat{A}\vct{\alpha}^{(k)}$ only depends on atoms $\{\vct{a}^{j^{(0)}},\vct{a}^{j^{(1)}},\dots,\vct{a}^{j^{(k-1)}}\}$ which were broadcasted at iterations $k'<k$ (step 4) and are thus available to $v_i$ at iteration $k$. Furthermore, atom $\vct{a}_j$ belongs to the local dataset of $v_i$ for all $j\in \mathcal{A}_i$. It follows that $v_i$ has enough information to compute its local gradient at iteration $k$ and thus to execute step 3. Moreover, it is easy to see that $v_i$ can execute steps 4, 5 and 6 based on information broadcasted at earlier steps of the same iteration, which completes the proof of the first part of the claim.

At iteration $k\geq 0$, dFW requires node $v_{i^{(k)}}$ to broadcast the atom $\vct{a}_{j_k}\in\mathbb{R}^d$, and every node $v_i\in V$ to broadcast a constant number of real values. Thus the communication cost of iteration $k$ is $O(Bd+NB)$.
\end{proof}

Using Claim~\ref{claim:exec}, we prove Theorem~\ref{thm:dfw}.

%\begin{thm}
%dFW terminates after $O(1/\epsilon)$ rounds and $O\left((Bd+NB)/\epsilon\right)$ total communication. At termination, all nodes hold the same global $\epsilon$-coreset of size $O(1/\epsilon)$ for problem (1) and a corresponding $\tilde{\vct{\alpha}}$ with optimality gap at most $\epsilon$.
%\end{thm}
\begin{proof}
We prove this by showing that dFW is a Frank-Wolfe algorithm, in the sense that its updates are equivalent to those of Algorithm~\ref{alg:fw}. For $k\geq 0$, at iteration $k$ of dFW, we have:
\begin{eqnarray*}
j^{(k)} & = & \argmax_{i\in[N]} \left[\argmax_{j\in\mathcal{A}_i} \left| \nabla f(\vct{\alpha}^{(k)})_j\right|\right]\\
& = & \argmax_{i\in[N],j\in\mathcal{A}_i} \left| \nabla f(\vct{\alpha}^{(k)})_j\right|\\
& = & \argmax_{j\in[n]} \left| \nabla f(\vct{\alpha}^{(k)})_j\right|,
\end{eqnarray*}
where the last equality comes from the fact that $\bigcup_{i\in[N]} \mathcal{A}_i = [n]$. Moreover, $g^{(k)}_{i^{(k)}} = \nabla f(\vct{\alpha}^{(k)})_{j^{(k)}}$. Therefore, the update (step 5) of dFW is equivalent to the Frank-Wolfe update (Algorithm~\ref{alg:fw}) for the $\ell_1$ constraint (Algorithm~\ref{alg:local}). The stopping criterion is also equivalent since
$$\sum_{i=1}^N S^{(k)}_i = \sum_{i=1}^N \sum_{j\in\mathcal{A}_i} \alpha^{(k)}_j\nabla f(\vct{\alpha}^{(k)})_j = \innerp{\vct{\alpha}^{(k)},\nabla f(\vct{\alpha}^{(k)})},$$
and $\innerp{\vct{s}^{(k)},\nabla f(\vct{\alpha}^{(k)})} = -\beta\left|g^{(k)}_{i^{(k)}}\right|$.
Consequently, dFW terminates after $O(1/\epsilon)$ iterations and outputs a feasible $\tilde{\vct{\alpha}}$ that satisfies $f(\tilde{\vct{\alpha}}) - f(\vct{\alpha}^*) \leq h(\tilde{\vct{\alpha}}) \leq \epsilon$. Combining this result with Claim~\ref{claim:exec} establishes the total communication cost. The second part of the theorem follows directly from the fact that dFW uses Frank-Wolfe updates.
\end{proof}

\section{Extensions of the Lower Bound}
\label{app:bound}

In this section, we extend the communication lower bound (Theorem~\ref{thm:lower}) to two important cases.

\subsection{Problems with $\ell_1$ Constraint}

The lower bound also applies for the case when the constraint is $\|\vct{\alpha}\|_1\leq \beta$.
Choose $\beta=1$ and consider the following problem
$$\min_{\|\vct{\alpha}\|_1 \leq 1} \frac{\|\mat{A} \vct{\alpha}\|_2^2}{\|\vct{\alpha}\|_1^2}$$
where $\mat{A}$ is a $d\times d$ orthonormal matrix.
After scaling, it is equivalent to
$$\min_{\|\vct{\alpha}\|_1 = 1} \|\mat{A} \vct{\alpha} \|_2^2.$$
Since $ \|\mat{A} \vct{\alpha} \|_2^2 = \vct{\alpha}^\top \mat{A}^\top \mat{A} \vct{\alpha} = \|\vct{\alpha} \|_2^2$, taking absolute values of the components in $\vct{\alpha}$ does not change the objective value.
The problem is then equivalent to optimizing $\|\mat{A} \vct{\alpha}\|_2^2$ over the unit simplex, and the lower bound follows.

\subsection{Larger Number of Nodes}

The analysis also extends to the case where the network has more than two nodes.
Select two nodes that are farthest apart in the network,
put the data $\mat{A}_1$ and $\mat{A}_2$ on these two nodes,
and by the same argument,
the communication is lower bounded by $\Omega((\frac{d}{\epsilon} \log \kappa - d\log d)D)$
bits, where $D$ is the diameter of the graph.

\end{appendices}

\end{document}